\documentclass[11pt,a4paper]{article}

\usepackage[T1]{fontenc}
\usepackage[utf8]{inputenc}
\usepackage{graphicx} % Required for inserting images
\usepackage{amsmath}
\usepackage{amssymb}
\usepackage{mathtools}
\usepackage{amsthm}
\mathtoolsset{showonlyrefs=true}  % show only equation numbers that are used.
\usepackage{xcolor} % EJOR asks that changes be marked (e.g. in red) in revisions
\usepackage[top=2.5cm, bottom=2.5cm, left=2.5cm, right=2.5cm]{geometry}
\usepackage{setspace}
\usepackage{caption}
\usepackage[round,authoryear]{natbib}
\bibpunct{(}{)}{;}{a}{,}{,}

\usepackage{comment}

\newcommand{\ra}{\alpha}
\newcommand{\rta}{\tilde{\alpha}}
\newcommand{\rN}{\mathcal{N}(G)}
\newcommand{\rA}[1]{\mathrm{#1}}

\newtheorem{theorem}{Theorem}
\newtheorem{lemma}[theorem]{Lemma}
\newtheorem{proposition}[theorem]{Proposition}
\newtheorem{corollary}[theorem]{Corollary}

\begin{document}

% ------------------------------------------------------------
% TITLE PAGE (EJOR: title, all author names, affiliations with
% full postal addresses, corresponding author with e-mail).
% >>> REPLACE THE PLACEHOLDERS BELOW WITH THE ACTUAL DETAILS <<<
% ------------------------------------------------------------
\begin{center}
{\LARGE\bfseries Refundable Deposits: How to Restore Cooperation in Finitely Repeated Games\par}

\vspace{1.2em}

{\large
Giulio Salizzoni\textsuperscript{a}%
\quad Domenico Mergoni Cecchelli\textsuperscript{b,c}\par Edward Plumb\textsuperscript{b}\quad Maryam Kamgarpour\textsuperscript{a} \par Galit Ashkenazi-Golan\textsuperscript{b}}

{\small
\textsuperscript{a}\, SYCAMORE Lab, EPFL - Swiss Federal Technology Institute of Lausanne, Lausanne, Switzerland\\
\textsuperscript{b}\, Department of Mathematics, London School of Economics and Political Science, London, UK\\
\textsuperscript{c}\, Department of Quantitative Methods, CUNEF Universidad, Madrid, Spain\\
\par}

\end{center}

\bigskip

\begin{abstract}
While infinitely repeated games admit a rich set of Nash equilibria, finitely repeated games typically have a much smaller and often inefficient one. We show how to enlarge this set using deposits: in each period a player may place a refundable sum with a neutral intermediary, returned when the game ends and forfeited following a deviation. Paying these deposits is voluntary and incentive compatible at every stage, so no commitment by the players is assumed, the only commitment required being that of the intermediary to a refund rule fixed before play begins. The mechanism sustains payoff profiles more efficient than those of the standard equilibria, without altering the underlying game and without transfers between players. We demonstrate it on the prisoner's dilemma, a congestion game, and a public goods game, all settings where cooperation cannot emerge in the standard finitely repeated version. We also apply it to a dynamic common-pool resource, suggesting that the construction extends beyond repeated stage-games.
\end{abstract}

% Keywords: EJOR requires 1-5 keywords, the first taken from the official
% list of EJOR keywords (here "Game theory", handled by Editor E. Borgonovo).
% The same keywords, in the same order, must be entered in the submission
% system, with the first one also entered under "Section/Category".
\noindent\textbf{Keywords:} Game theory; Repeated games; Cooperation;
Mechanism design; Subgame perfect equilibrium

\bigskip
%\thispagestyle{empty}
%\pagestyle{empty}

%%%%%%%%%%%%%%%%%%%%%%%%%%%%%%%%%%%%%%%%%%%%%%%%%%%%%%%%%%
\section{Introduction}

% Introducing the infinite horizon setting and the folk theorem
One of the central insights of game theory is that repeated interactions can sustain cooperation. The ability to reply to the actions of one's opponents, to punish undesirable behaviour and reward a desired one, gives rise to a richness of equilibrium outcomes in an infinitely repeated interaction, and to the ability to support cooperative outcomes. This richness is captured by the folk theorem, which characterises the set of equilibrium payoffs in infinitely repeated games. Namely, any feasible payoff vector that guarantees each player at least their minmax value, that is, the best payoff a player can secure against any behaviour of the opponents by responding optimally to it, can be supported as a Nash equilibrium, provided players are sufficiently patient \citep{sorin1992repeated, fudenberg1986folk}. This result has profoundly shaped the analysis of cooperation in economics, from collusion and relational contracts to international agreements \citep{green1984noncooperative, levin2003relational, barrett1994self}.

% Highlight difference with the finitely repeated setting
This richness emerging in infinite interactions stands in contrast to the set of equilibrium payoffs in finitely repeated games. When the horizon is fixed, backward induction may restrict the set of equilibrium outcomes \citep{fudenberg1991game}. This limitation can be overcome when the stage-game that is repeated has multiple Nash equilibria \citep{benoit1984finitely, krishna1985finitely, gossner2020repeated}. However, in settings such as the repeated Prisoner’s Dilemma with a unique stage-game Nash equilibrium, cooperation fails. In the last period, players face what is effectively a one-shot game, so the only equilibrium play is the stage-game Nash equilibrium. Since this outcome is fixed regardless of prior history, the same reasoning applies to the second-to-last period, and, by backward induction, to every period before it. Thus, while the infinitely repeated version admits a continuum of Nash equilibria, the only Nash equilibrium of the finitely repeated version is the stage-game Nash equilibrium repeated at each iteration. The gap is not confined to stylised examples: in operational settings such as repeated inventory transshipment, full cooperation is a subgame perfect equilibrium of the infinitely repeated game only when players are sufficiently patient \citep{huang2010repeated}, and the argument does not survive a known terminal date. A subgame perfect equilibrium is one that remains an equilibrium in the subgame following every history, including the histories that arise only when some player has deviated.

A large body of work has attempted to bridge the gap between finitely and infinitely repeated games. Some of the most common approaches introduce incomplete information \citep{kreps1982rational} or uncertainty about the horizon \citep{normann2012impact}. Other studies assume limited rationality \citep{radner1986can}, bounded complexity of the strategies \citep{neyman1985bounded, rubinstein1986finite}, or add external enforcement that implements commitment \citep{faina1998unilateral, kalai2010commitment} or removes players, namely ostracism \citep{hirshleifer1989cooperation}. A related strand supports an outcome that is not itself an equilibrium by means of incentive strategies, and asks under which conditions the implied threat is credible \citep{breton2008incentive}. A further strand leaves the game untouched and weakens the solution concept instead, asking only that no player gain more than $\varepsilon$ by deviating \citep{radner1981monitoring, parilina2015approximated, flesch2016refinements}. Here $\varepsilon$ is the payoff a player must be willing to forgo for cooperation to persist, and it is not recovered. While these approaches can sustain cooperation, they do so by modifying the information available to the players, the horizon, or other parameters, by requiring some enforcing procedure, or by relaxing the equilibrium notion itself. As a result, it remains unclear whether the inefficiency of the original finitely repeated game reflects a fundamental limitation of finite interaction.

We introduce a deposit mechanism that applies to arbitrary games and requires no commitment from the players themselves. What we require is a reliable deposit mechanism, which is a neutral intermediary rather than any participant that commits to it. At each period every agent may place a sum of money with a neutral intermediary, and the accumulated stake is returned when the game ends. Paying is voluntary, and in equilibrium each player chooses to pay in their own interest. The mechanism does not alter the underlying stage-game and requires no transfers across agents. It recovers, and in fact extends, the payoffs of the Nash-threat folk theorem \cite{friedman1971non}, the version of the folk theorem in which a deviation is punished by permanent reversion to a Nash equilibrium of the stage-game rather than by a minmax punishment. The mechanism sustains, as a subgame perfect equilibrium of the finitely repeated game, any feasible payoff vector that gives every player strictly more than the lowest payoff that player receives across all Nash equilibria of the stage-game. This set contains every payoff that strictly Pareto-dominates a single stage-game Nash equilibrium (as in \citealp{fudenberg2007nash}), but is generally larger. In this way, the mechanism can sustain cooperation.

% Presentation of the main result
Our main result shows that the set of equilibrium outcomes in finitely repeated games can be significantly enlarged with the above proposed mechanism. In particular, any payoff profile that can be supported by an equilibrium of the infinitely repeated game which uses Nash threats can be implemented as a subgame perfect equilibrium of a finitely repeated game augmented with deposits. The construction applies to arbitrary stage-games and sufficiently long but finite horizons. 

%The mechanism assumes the possibility to have a device where players deposit sums, and can apply pre-determined rules as to when and how much of the sums are paid back to the players. The mechanism observes the deposits the players pay in every period, and is able to communicate it to all players. Besides the ability of deposit, the sets of actions available to the players remain the same, the game retains its information structure, and there is no additional assumption regarding the ability to form groups or ostracise players. 

% Discounted cost of the deposits vs accumulation growing with time
%The result is driven by a simple but powerful asymmetry. Deposits are costly because they are paid before the terminal date and, in our baseline formulation, do not earn interest while held by the intermediary. However, the discounted cost of deposits remains uniformly bounded, their undiscounted accumulation grows linearly with the horizon. This allows agents to transform a sequence of small, incentive-compatible payments into a large terminal stake that disciplines behaviour even in the final periods. The mechanism effectively recreates, in finite time, the continuation values that sustain cooperation in infinite-horizon models.  

%When agents do not discount the future, using deposits entails no cost, as any payment is fully recovered at the terminal date. In this case, sustaining cooperation is straightforward: agents can accumulate arbitrarily large stakes over time without affecting their payoffs, making deviations unprofitable even in the final periods.

In a repeated game setting, the players typically evaluate a stream of payoffs using a discount factor: obtaining a sum today is better than obtaining the same sum tomorrow. We do not assume that deposits earn interest or otherwise gain value over time, and we show that the mechanism works even when they do not. The reason is a contrast between two quantities. The present-value cost of the deposits stays bounded no matter how long the game lasts, while the total amount accumulated grows in proportion to the horizon. A player can therefore build up a large terminal stake out of a sequence of small payments, at a present-value cost that stays small.

Deposit–refund and hostage mechanisms have been proposed as a way to sustain cooperation, particularly in the context of international environmental agreements \citep{hovi2012credible, hovi2015hope, werner1993hostages, mcevoy2025providing, fuentesalbero2010can}. They differ from ours in two main respects. First, they focus on a specific application and to a single cooperative target, establishing that this one outcome can be supported, rather than characterising the whole set of profiles a mechanism can sustain for a class of games. Second, they place a single stake before play begins and do not involve discounting. Once future payoffs are discounted, however, such an upfront stake is the most expensive way to post collateral, since a sum committed today is worth more than the same sum returned later. In our setting the stake can be spread over the horizon rather than posted upfront, so its timing becomes a design choice.

The closest antecedent is the payment-scheme approach of \citet{parilina2024payment}, but the mechanisms are still very different. There, an intermediary receives the players' rewards and withholds the cooperative surplus at the source; in our setting the players receive the reward and then decide whether to make a deposit, a choice that is incentive compatible at every stage and requires no transfers between players. Their analysis also fixes a single target, the joint-payoff-maximising profile, and asks how to sustain it, whereas we characterise the entire set of supportable outcomes. Finally, where their schemes are cast as individually rational, in the sense that no player does worse than by reverting to non-cooperative play, and stable against deviation for a cooperative imputation, ours delivers a subgame perfect equilibrium of the finitely repeated game.

Our result offers another perspective on cooperation in finitely repeated games. Alongside approaches that relax the horizon or the informational structure, cooperation can also be sustained by introducing instruments that let agents transfer incentives across time. Such instruments are of interest wherever interaction is finite and long-term commitment is unavailable, including repeated interactions among learning algorithms, where cooperative outcomes have been argued to reflect incomplete exploration rather than a stable equilibrium \citep{abada2024collusion}.

The rest of the paper is organised as follows. Section \ref{sec:model} introduces the stage-game, the augmented finitely repeated game with deposits, and the infinite-horizon benchmark whose outcomes we aim to recover. Section \ref{sec:main} states and proves our main result. Section \ref{sec:optimal} shows how to compute the cheapest deposit scheme for a given target and horizon, and how the same program accommodates further requirements. Section \ref{sec:examples} demonstrates the mechanism on four examples, and Section \ref{sec:conclusion} concludes. The first example is the repeated prisoner's dilemma, for which we show that horizons as short as two or three periods can be enough to sustain cooperation with the mechanism proposed. The second, a congestion game with a rotating allocation, walks concretely through how the mechanism operates and how small its implementation cost can be. The third, a public goods game, shows that the mechanism does not necessarily need to rely on the harshest available punishment. The fourth, a dynamic common-pool resource with an evolving stock, extends the mechanism beyond repeated stage-games.
% Possible applications
%Applications of the mechanism arise naturally in environments where interactions are finite, enforcement is limited, and some form of neutral intermediary is already available. In international settings such as climate and/or trade agreements, such as the Paris Agreement \citep{agreement2015paris}, countries could pledge financial assets or sovereign bonds that are returned conditional on compliance, strengthening incentives without requiring direct transfers. More broadly, the mechanism is directly implementable in digital environments where smart contracts can hold and conditionally release funds \citep{kolvart2016smart}. This includes online markets \citep{ba2003building} and emerging AI-driven economies \citep{hadfield2025economy, tomasev2025virtual}, where agents interact repeatedly but without long-term commitment, and where smart contracts can provide a minimal enforcement technology.

%%%%%%%%%%%%%%%%%%%%%%%%%%%%%%%%%%%%%%%%%%%%%%%%%%%%%%%%%%
\section{Environment and mechanism}\label{sec:model}

\subsection{The stage-game}

We consider a stage-game $G = (N, \{A^i\}_{i \in N}, \{r^i\}_{i \in N})$, where $N$ is a finite set of players, with $n := |N|$, $A^i$ is the finite set of actions available to player $i$, and $r^i : A \to \mathbb{R}$ is the stage payoff function associated with player $i$, with $A:=\prod_{i \in N} A^i$ the set of action profiles. We write $\Delta(A^i)$ for the set of mixed actions of player $i$ and $\Delta := \prod_{i \in N} \Delta(A^i)$ for the set of mixed actions. 
As mixed actions randomise over pure actions, and by linearity of expectation, we can, via a slight abuse of notation, extend the payoff function multilinearly to any $\ra \in \Delta$ by
\begin{align*}
    r^i(\ra) := \mathbb{E}_{a\sim \ra}[r^i(a)]= \sum_{a \in A} \Big( \prod_{j \in N} {\ra}^j(a^j) \Big)\, r^i(a).
\end{align*}
An action profile $\ra \in \Delta$ is a Nash equilibrium of $G$ if $r^i(\ra) \ge r^i({\rta}^i, {\ra}^{-i})$ for every player $i$ and every ${\rta}^i \in \Delta(A^i)$, where ${\ra}^{-i} := ({\ra}^j)_{j \neq i}$. Let $\rN \subseteq \Delta$ denote the set of stage-game Nash equilibria. By Nash's theorem $\rN$ is non-empty, and as a closed subset of the compact set $\Delta$ it is compact, so the worst equilibrium payoff for player $i$ is well defined:
\begin{align*}
    r^{i,\mathrm{NE}} := \min_{\ra \in \rN} r^i(\ra).
\end{align*}
Denote by ${\ra}^{\mathrm{pun}}(i)$ a stage-game Nash equilibrium that attains this payoff for player $i$ (if there are several such equilibria, any of them will do). Formally,
${\ra}^{\mathrm{pun}}(i) \in \operatorname*{arg\,min}_{\ra \in \rN} r^i(\ra)$. We allow ${\ra}^{\mathrm{pun}}(i)$ to be mixed, since the equilibrium minimising a given player's payoff need not be in pure actions.

\subsection{The augmented finitely repeated game}

We enrich the standard definition of finitely repeated games with the concept of \emph{deposits} and \emph{terminal payoffs}. We first fix non-negative deposits $(d^i_t)_{i\in N, t\in [T]}\in \mathbb{R}^{n\times T}_{\geq 0}$, where $[T] := \{1,\dots,T\}$. At period $t$, after seeing the realised action profile and receiving their payoff, player $i$ makes a binary decision to either pay a deposit $d^i_t$ or not. Upon termination of period $T$, players receive a terminal payoff given by the terminal payoff functions $(w^i)_{i\in N}$ defined below. More specifically, the game unfolds as follows at period $t \in [T]$:

\begin{enumerate}
    \item Each player $i\in N$ simultaneously chooses an action $a_t^i \in A^i$. Let $a_t = (a_t^i)_{i \in N}$ denote the action profile played at time $t$.
    \item Each player observes the full action profile and receives stage payoff $r^i(a_t)$.
    \item Each player $i$ then decides whether to pay deposit $d_t^i$. If player $i$ pays, their period-$t$ payoff is reduced by $d_t^i$. Each player observes the full deposit decisions of all players. Let $p_t^i=1$ if player $i$ paid the deposit at time $t$ and $p_t^i=0$ otherwise, and let $p_t = (p_t^i)_{i \in N}$.
    \item At the end of period $T$, each player $i$ receives a terminal payoff $w^i$, which may depend on the full history of play as defined below.
\end{enumerate}

Let $H_t$ denote the set $(A \times \{0,1\}^N)^t$ of histories of length $t$, recording all actions and deposit decisions up to and including period $t$, with $H_0 := \varnothing$ the empty history. Each terminal payoff function is then of the form $w^i:H_T\to \mathbb{R}_{\geq 0}$. 

Players discount future payoffs by a common factor $\delta \in (0,1)$. The total discounted payoff to player $i$ along a terminal history $h_T = (a_t, p_t)_{t=1}^T$ is
\begin{align*}
    U^i(h_T) = \sum_{t=1}^T \delta^{t-1} \left( r^i(a_t) - p_t^i d_t^i \right) + \delta^{T-1} w^i(h_T).
\end{align*}

We write $G_\delta^T(d,w)$ for the augmented finitely repeated game with horizon $T$, fixed deposit scheme $d = (d^i_t)_{i \in N,\, t \le T}$, terminal-payoff rule $w = (w^i)_{i \in N}$ and discount factor $\delta$. The game $G_\delta^T(d,w)$ has $N$ as the set of players. A behaviour strategy of a player in the finitely repeated game determines both the action to take and the deposit decisions. Formally, a behaviour strategy for player $i$ is a pair $(\sigma^i, \rho^i)=((\sigma_t^i,\rho^i_t)_{t=1,\ldots,T})$ defined as follows:
\begin{itemize}
\item player $i$ samples $a^i_t$ according to a distribution $\sigma_t^i : H_{t-1} \to \Delta(A^i)$, the \emph{action rule}. We allow players to sample their actions since stage-game Nash equilibria used as the threat in the Nash-threat folk theorem may require randomisation. 
\item player $i$ samples $p^i_t$ according to $\rho_t^i : H_{t-1} \times A \to \{0,1\}$, the \emph{deposit rule}, which prescribes a deposit decision, given the past history and the last action profile $a_t$ played. For simplicity, we consider only pure deposit decisions\footnote{This is without loss of generality, as not depositing will be punished immediately.}.
\end{itemize}
A strategy profile $(\sigma, \rho):=(\sigma^i, \rho^i)_{i\in N}$ induces a distribution $\mathbb{P}_{(\sigma,\rho)}$ over terminal histories $H_T$, and we write $\mathbb{E}_{(\sigma,\rho)}$ for the corresponding expectation. Player $i$ evaluates a profile $(\sigma,\rho)$ by $\mathbb{E}_{(\sigma,\rho)}\big[ U^i(h_T) \big]$. For any history $h\in \bigcup_{t=0}^{T-1} H_t$, a strategy-profile $(\sigma,\rho)$ induces a distribution over the terminal histories $H_T$, and the respective expected reward for player~$i$ is $\mathbb{E}_{(\sigma,\rho)|h}\big[ U^i(h_T) \big]$. A subgame perfect equilibrium in $G^T_\delta(d,w)$ is a strategy profile $(\sigma^*,\rho^*)$ such that for every $i\in N$, strategy for player $i$, $(\sigma^i, \rho^i)$ and history $h\in \bigcup_{t=0}^{T-1} H_t$, it holds that $\mathbb{E}_{(\sigma^*,\rho^*)|h}\big[ U^i(h_T) \big]\geq \mathbb{E}_{(({\sigma^*}^{-i},{\rho^*}^{-i}), (\sigma^i,\rho^i))|h}\big[ U^i(h_T) \big]$, where $(({\sigma^*}^{-i},{\rho^*}^{-i}), (\sigma^i,\rho^i))$ is the strategy profile where each player $j\in N\setminus\{i\}$ plays $({\sigma^*}^j, {\rho^*}^j)$, and player $i$ plays $(\sigma^i,\rho^i)$.  

It is natural to require that the terminal payoffs can be funded by the deposits paid during the finitely repeated game. Hence, we impose the following feasibility condition: for every terminal history $h_T \in H_T$ and every player $i$,
\begin{align*}
        w^i(h_T) \;\leq\; \sum_{t=1}^{T} p_t^i\, d_t^i.
\end{align*}
Thus, we require no cross-subsidisation between players and no external funding. In practice, one might have an intermediary whose sole function is to hold deposits and return them according to the predetermined rule $w^i$ (this intermediary is not a strategic player in this game, but rather a facilitating mechanism). The concept of \emph{terminal payoffs} was introduced by \citet{kandori1992repeated}, with the difference that here the terminal payoff is funded by the players' own deposits rather than being exogenous.

\subsection{The infinite-horizon benchmark}

In a finitely repeated game, backward induction impedes cooperation: a deviation in the final round cannot be punished, so defection is optimal there, and the argument propagates backwards to the first round. However, in infinitely repeated games, every round has a future. A deviation can therefore be punished by the other players, who switch to actions that lower the deviator's continuation payoff. When players are sufficiently patient, this threat sustains cooperation as an equilibrium. Our setting recreates the same threat in finite time. A deposit that is returned only if no one deviates acts like a continuation payoff, so losing it after a deviation lowers the deviator's payoff just as a reduced continuation value would, recreating the incentive to cooperate. We consider infinite sequences of pure action-profiles. We provide conditions for the first $T$ action-profiles of such sequences to be the equilibrium play of the $T$-period game with deposits. In Sections~\ref{subsec:folk} and~\ref{subsec:folk-finite} we show, using previous results from game theory, that this set of sequences is rich, and obtain a folk-theorem result.

Let $G^\infty_\delta$ denote the infinitely repeated game with stage-game $G$ and discount factor $\delta \in (0,1)$. Write
\begin{align*}
    A^{\mathbb{N}} := \big\{ \mathbf{a} = (a_1, a_2, \ldots) \;:\; a_t \in A \ \text{ for every } t \ge 1 \big\}
\end{align*}
for the set of infinite sequences of pure action-profiles, and consider such a sequence $\mathbf{a} \in A^{\mathbb{N}}$. The utility for player~$i$ from this sequence is

\begin{align*}
    (1 - \delta)\, \mathbb{E}\Big[ \sum_{t=1}^\infty \delta^{t-1} r^i(a_t) \Big].
\end{align*}

We find sufficient conditions under which we can construct an equilibrium of the $T$-period game, where $a_{1:T}:=(a_1,\ldots,a_T)$ is the action history played in that equilibrium. To this end, we consider possible gains from unilateral deviations from this sequence.

For a given $\mathbf{a}=(a_1,a_2,\ldots)$, let $a_{1:t}:=(a_1, \ldots, a_t)$ denote its truncation at period $t$. Thus $a_{1:t}\in A^t$ is an action history, in contrast to $h_t\in H_t$, which records both actions and deposit decisions. Define $r^{i,\mathrm{coop}}_t$ to be the stage-game payoff for player $i$ if the action profile $a_t$ is played, and  $r_t^{i,\mathrm{dfct}}$  to be the maximal stage-game payoff for a unilateral deviation of player~$i$ from action profile $a_t$.
\begin{alignat*}{3}
    &r_t^{i,\mathrm{coop}} &&= r^i(a_t), \\
    &r_t^{i,\mathrm{dfct}} &&= \max_{a^i \in A^i\setminus \{a^i_t\}} r^i\big( a^i,a^{-i}_t\big).
\end{alignat*}

In the finite-horizon game equilibria we introduce, deviations are punished by a combination of using the deposits, and by switching to playing the stage-game Nash equilibrium with lowest payoff for the deviating player forever. Therefore, we focus on the class of action-profile sequences that can be supported by Nash-threats with a strict incentive margin. These are sequences such that the continuation payoff for each player, when following the sequence, exceeds the Nash-threat continuation payoff by a uniform positive amount. 

Formally, define $\mathcal{A}^\ast \subseteq A^{\mathbb{N}}$ as the set of sequences $\mathbf{a} \in A^{\mathbb{N}}$ such that for every player $i \in N$ there exists $\varepsilon^i > 0$ satisfying, for all $t \in \{1, 2, \ldots\}$,
\begin{align}\label{eq:sigma-star}
    \sum_{\tau=t}^{\infty} \delta^{\tau-t} r^{i,\mathrm{coop}}_\tau
\;\ge\;
\max\big\{ r_t^{i,\mathrm{coop}},\, r_t^{i,\mathrm{dfct}} \big\} + \sum_{\tau=t+1}^{\infty} \delta^{\tau-t} r^{i,\mathrm{NE}} + \varepsilon^i.
\end{align}

Equivalently,
\begin{align*}
    \mathcal{A}^\ast = \Big\{\, \mathbf{a} \in A^{\mathbb{N}} \;:\; \forall i \in N \ \ \exists\, \varepsilon^i > 0 \ \ \forall t \ge 1, \ \eqref{eq:sigma-star} \text{ holds} \,\Big\} .
\end{align*}

The left-hand side is the continuation value of the sequence from period $t$ onwards. The right-hand side is the value of the most profitable stage-game deviation in period $t$, followed by permanent reversion to the worst stage-game Nash equilibrium for the deviator. When $r_t^{i,\mathrm{dfct}} \ge r_t^{i,\mathrm{coop}}$ the maximum is $r_t^{i,\mathrm{dfct}}$. When instead the prescribed action is itself a stage best reply, the binding term is $r_t^{i,\mathrm{coop}}$ and the condition reduces to $\sum_{\tau=t+1}^{\infty} \delta^{\tau-t}\big(r^{i,\mathrm{coop}}_\tau - r^{i,\mathrm{NE}}\big) \ge \varepsilon^i$, a participation requirement ensuring that the cooperative continuation strictly dominates the Nash threat for every player. This inequality enables the construction of the finite-horizon equilibrium, where the action history played (in equilibrium) is the truncation $a_{1:T}$. The equilibrium construction uses the one-shot deviation principle, according to which a strategy profile is a subgame perfect equilibrium of a finite-horizon game precisely when no player can gain by departing from it at a single history and conforming thereafter.

\subsection{Which payoffs does the class $\mathcal{A}^\ast$ generate?}\label{subsec:folk}

The class $\mathcal{A}^\ast$ is defined by an incentive condition, and it is not immediate from that definition which payoff vectors its members generate, nor indeed that it is non-empty. This subsection settles the question using results from the literature on infinitely repeated games: once players are patient enough, $\mathcal{A}^\ast$ generates every feasible payoff vector that gives each player strictly more than their worst stage-game Nash payoff.

Write
\begin{align*}
    U := \{ r(a) : a \in A \} \subset \mathbb{R}^{N}, \qquad
    F := \operatorname{co} U, \qquad
    F^{\mathrm{NE}} := \big\{ v \in F : v^i > r^{i,\mathrm{NE}} \ \text{for every } i \in N \big\},
\end{align*}
so that $U$ collects the payoff vectors generated by pure action profiles, $F$ is the feasible set, and $F^{\mathrm{NE}}$ is the set of feasible payoffs that are individually rational relative to the threats the mechanism uses. We also write
\begin{align*}
    C^i := \max_{a \in A} r^i(a) - \min_{a \in A} r^i(a)
\end{align*}
for the range of player $i$'s stage payoff, a constant that depends neither on the discount factor nor on the target sequence. The reference vector $(r^{i,\mathrm{NE}})_{i \in N}$ need not be the payoff vector of any single stage-game Nash equilibrium, since distinct players may be minimised by distinct equilibria. Consequently $F^{\mathrm{NE}}$ contains, and in general strictly contains, the set of feasible payoffs that Pareto dominate the payoffs of one fixed stage-game Nash equilibrium.

The bridge from a payoff vector to a sequence of pure action profiles is supplied by the literature. \citet{sorin1986repeated} showed that, for patient enough players, every point of $F$ is exactly the discounted average of a deterministic sequence of pure action profiles; this statement appears as Lemma~1 of \citet{fudenberg1991dispensability}, who credit it to Sorin and record the sufficient bound $\delta \ge 1 - 1/L$ when $F$ is a polytope with $L$ extreme points. Exact attainability alone does not suffice for us, because \eqref{eq:sigma-star} constrains every continuation of the sequence and not merely its value at the first date, and the continuation payoffs along an arbitrary attaining sequence may wander far from the target. What we use is the strengthening given by Lemma~2 of \citet{fudenberg1991dispensability}, restated here in our notation.

\begin{lemma}[Lemma~2 of \citet{fudenberg1991dispensability}]\label{lem:FM}
For every $\eta > 0$ there exists $\underline{\delta}_\eta < 1$ such that for every $\delta \in (\underline{\delta}_\eta, 1)$ and every $v \in F$ there is a sequence $\mathbf{a} = (a_1, a_2, \ldots)$ of pure action profiles satisfying
\begin{align*}
    (1-\delta) \sum_{t = 1}^{\infty} \delta^{t-1} r(a_t) = v
    \qquad \text{and} \qquad
    \Big\lVert (1-\delta) \sum_{\tau = t}^{\infty} \delta^{\tau - t} r(a_\tau) - v \Big\rVert < \eta
    \ \ \text{for every } t \ge 1.
\end{align*}
\end{lemma}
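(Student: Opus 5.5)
The plan is to construct the sequence greedily, in the spirit of Sorin's exact-attainability construction. We keep a running ``target continuation value'' and steer it with a rule that never lets it leave a small neighbourhood of $v$. Write $v_t$ for the continuation value the tail starting at $t$ must deliver, so that $v_1 = v$ and
\begin{align*}
v_t = (1-\delta) r(a_t) + \delta v_{t+1},
\qquad\text{that is,}\qquad
v_{t+1} = \frac{v_t - (1-\delta) r(a_t)}{\delta}.
\end{align*}
If every $v_t$ stays in $F$ and the recursion is run forever, then telescoping gives
\begin{align*}
(1-\delta)\sum_{\tau\ge t}\delta^{\tau-t} r(a_\tau) = v_t - \lim_k \delta^{k} v_{t+k}.
\end{align*}
The limit term vanishes because $F$ is bounded. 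Both claims of the lemma therefore reduce to choosing $a_t$ at each step so that $v_{t+1}$ remains in $F \cap B(v,\eta)$.

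For the choice rule, I would first reduce to a nice case. Take the target $v$, shrink slightly toward a point in the relative interior of $F$, and work within the affine hull of $F$. Write $v_t = \sum_\ell \lambda_\ell u_\ell$ as a convex combination of the extreme points $u_1,\dots,u_L$ of $F$ (each in $U$). Pick an extreme point $u_k$ with weight $\lambda_k \ge 1/L$, and play the pure profile $a_t$ with $r(a_t) = u_k$. Then
\begin{align*}
v_{t+1} = v_t + \frac{1-\delta}{\delta}\,(v_t - u_k),
\end{align*}
whose barycentric weights are $\lambda_\ell/\delta$ for $\ell\ne k$ and $(\lambda_k - (1-\delta))/\delta$ for $k$. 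These are nonnegative exactly when $\lambda_k \ge 1-\delta$, which holds as soon as $\delta \ge 1-1/L$. This recovers the Sorin/Fudenberg–Maskin bound, and it keeps $v_{t+1}\in F$.

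The main obstacle is the second claim: keeping every continuation within $\eta$ of $v$ \emph{uniformly in $t$}, not merely inside $F$. Each step moves $v_t$ by at most $\frac{1-\delta}{\delta}\operatorname{diam}F$, which is small when $\delta$ is near $1$. The real danger is drift: many steps in the same direction push $v_t$ far from $v$. The greedy rule above only ensures feasibility. To fix this, I would modify the choice so that it is \emph{self-correcting around $v$} rather than around the current $v_t$. Fix once and for all a representation $v=\sum_\ell \mu_\ell u_\ell$, and at each step pick the extreme point whose weight in the current representation most exceeds its share in $\mu$. Since the step pushes $v_t$ away from $u_k$, this choice reduces the largest positive coordinate deviation from $\mu$. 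I would then show that the weight deviations $\lambda_t-\mu$ stay bounded by a constant times $(1-\delta)$, in the manner of a deterministic ``largest remainder'' or chairperson-assignment scheduling argument.

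The boundary case, where $v$ lies on a face of $F$ and some $\mu_\ell=0$, would be handled by working within the minimal face containing $v$, since that face is itself a polytope. The resulting bound $\lVert v_t - v\rVert \le c(1-\delta)/\delta$, with $c$ depending only on $F$, gives $\underline{\delta}_\eta$ uniformly in $v$, which is what uniformity over $v\in F$ requires. If the scheduling argument proves delicate, I would fall back on citing \citet{fudenberg1991dispensability} directly, since the statement is theirs.
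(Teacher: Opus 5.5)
\textbf{Gap: the self-correcting rule loses feasibility near the boundary of $F$.}

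Your reduction is sound. With $v_{t+1}=(v_t-(1-\delta)r(a_t))/\delta$, boundedness of $(v_t)$ makes $v_t$ the continuation value. The Sorin step, which plays a vertex of weight at least $1/L\ge 1-\delta$, keeps $v_t\in F$.

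The gap is in the self-correcting rule. Choosing the vertex with the largest excess $\lambda_{t,\ell}-\mu_\ell$ discards the property $\lambda_{t,k}\ge 1-\delta$ that made the step feasible. The rule then breaks whenever some $\mu_\ell$ is positive but small relative to $1-\delta$. Concretely, let $[u_1,u_2]$ be an edge of $F$ and $v=(1-\epsilon)u_1+\epsilon u_2$ with $0<\epsilon<\delta(1-\delta)$. On a face the representation is forced, so $\mu=(1-\epsilon,\epsilon)$, and passing to the minimal face removes nothing.
\begin{itemize}
\item At $t=1$ both excesses vanish. Playing $u_2$ is infeasible outright, since $\epsilon<1-\delta$.
\item Playing $u_1$ gives $\lambda_2=\bigl((\delta-\epsilon)/\delta,\ \epsilon/\delta\bigr)$. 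Now $u_2$ has the only positive excess, namely $\epsilon(1-\delta)/\delta$.
\item Your rule therefore plays $u_2$. Its weight $\epsilon/\delta$ is below $1-\delta$, so $\lambda_{3,2}=(\epsilon/\delta-(1-\delta))/\delta<0$ and $v_3\notin F$.
\end{itemize}
If $(v_t)_{t\ge 3}$ were bounded, telescoping would make $v_3$ a continuation value and hence a point of $F$. So the $v_t$ are unbounded whatever is played afterwards. Such $v$ exist for every $\delta<1$. Hence the bound $\lVert v_t-v\rVert\le c(1-\delta)/\delta$, with $c$ depending only on $F$, is false for this rule, and you do not obtain the uniformity of $\underline{\delta}_\eta$ in $v$ that the statement requires. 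Shrinking $v$ toward the relative interior is not available either, because the first identity demands that the discounted average equal $v$ exactly.

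\textbf{Repair: restrict the excess maximisation to feasible vertices.}

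Maximise the excess only over $J_t=\{\ell:\lambda_{t,\ell}\ge 1-\delta\}$, which is non-empty once $\delta\ge 1-1/L$. Write $z_t:=(\lambda_t-\mu)/(1-\delta)$ and let $e_k$ be the $k$-th unit vector. The update is $z_{t+1}=(z_t+\mu-e_{k_t})/\delta$, and $\sum_\ell z_{t,\ell}=0$ throughout. The invariant $z_{t,\ell}\ge -L/\delta$ then holds for every coordinate:
\begin{itemize}
\item Indices outside $J_t$ have $z_{t,\ell}<1$, so $\max_{J_t}z_t\ge-(L-1)$. The chosen coordinate therefore satisfies $z_{t+1,k_t}\ge -L/\delta$.
\item An unchosen coordinate with $\mu_\ell\ge L(1-\delta)/\delta$ cannot fall below $-L/\delta$, by induction.
\item An unchosen coordinate with smaller $\mu_\ell$ satisfies $z_{t,\ell}\ge-\mu_\ell/(1-\delta)>-L/\delta$, because $\lambda_t\ge 0$.
\end{itemize}
The zero sum then gives $z_{t,\ell}\le L(L-1)/\delta$. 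Hence $\lVert\lambda_t-\mu\rVert_\infty\le L(L-1)(1-\delta)/\delta$ uniformly in $v$, which is the chairperson-type bound you were after.

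Your unrestricted rule satisfies the analogous invariant $z\ge -1/\delta$ only when $\min_\ell\mu_\ell\ge(1-\delta)/\delta$. That non-uniform version would already suffice for Proposition~\ref{prop:folk}, where $v$ is fixed. It does not suffice for the lemma as stated. The paper itself gives no proof of the lemma and imports it from Fudenberg and Maskin.
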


Public randomisation is therefore dispensable: the target payoff is delivered by a deterministic path of pure action profiles whose continuation values never drift far from it, which is exactly the input \eqref{eq:sigma-star} requires. Proposition~\ref{prop:folk} is an immediate consequence of Lemma~\ref{lem:FM}.

\begin{proposition}\label{prop:folk}
Let $v \in F^{\mathrm{NE}}$ and set $\eta := \tfrac{1}{2} \min_{i \in N} \big( v^i - r^{i,\mathrm{NE}} \big) > 0$. There exists $\underline{\delta} < 1$ such that for every $\delta \in (\underline{\delta}, 1)$ there is a sequence $\mathbf{a} \in \mathcal{A}^\ast$ whose normalised discounted payoff vector is exactly $v$ and all of whose continuation payoff vectors lie within $\eta$ of $v$.
\end{proposition}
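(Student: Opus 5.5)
Given $v \in F^{\mathrm{NE}}$, the sequence comes from Lemma~\ref{lem:FM} with tolerance $\eta$. The remaining work is to show that it satisfies \eqref{eq:sigma-star} once $\delta$ is close to $1$. Since $v^i > r^{i,\mathrm{NE}}$ for each of finitely many players, $\eta>0$. Lemma~\ref{lem:FM} then gives $\underline{\delta}_\eta$ and, for each $\delta > \underline{\delta}_\eta$, a pure sequence $\mathbf{a}$ with normalised discounted payoff exactly $v$ and every normalised continuation payoff within $\eta$ of $v$. This already yields the second and third claims of the proposition, so it remains to check $\mathbf{a} \in \mathcal{A}^\ast$, possibly after raising $\underline{\delta}$.

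To verify \eqref{eq:sigma-star}, multiply through by $(1-\delta)$. Write $V^i_t := (1-\delta)\sum_{\tau \ge t}\delta^{\tau-t} r^{i,\mathrm{coop}}_\tau$ for the normalised continuation value. Because $V^i_t = (1-\delta) r^{i,\mathrm{coop}}_t + \delta V^i_{t+1}$, the condition is equivalent to
\begin{align*}
(1-\delta) r^{i,\mathrm{coop}}_t + \delta V^i_{t+1} \ge (1-\delta)\max\{r^{i,\mathrm{coop}}_t, r^{i,\mathrm{dfct}}_t\} + \delta r^{i,\mathrm{NE}} + (1-\delta)\varepsilon^i .
\end{align*}
The stage-payoff difference $\max\{r^{i,\mathrm{coop}}_t,r^{i,\mathrm{dfct}}_t\} - r^{i,\mathrm{coop}}_t$ is at most $C^i$. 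Using $V^i_{t+1} > v^i - \eta \ge r^{i,\mathrm{NE}} + \eta$, where the last step uses $\eta \le \tfrac12(v^i - r^{i,\mathrm{NE}})$, it suffices that
\begin{align*}
\delta \eta \ge (1-\delta)(C^i + \varepsilon^i).
\end{align*}

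The choice of $\varepsilon^i$ and $\delta$ is then routine. Take $\varepsilon^i := \eta$, or any fixed positive constant, and choose $\underline{\delta} \ge \underline{\delta}_\eta$ large enough that $\delta\eta \ge (1-\delta)(\max_i C^i + \eta)$ for all $\delta > \underline{\delta}$. This is possible because the left side tends to $\eta>0$ and the right side tends to $0$ as $\delta \to 1$. The bound is uniform in $t$, since it uses only the uniform continuation estimate from Lemma~\ref{lem:FM} and the $t$-independent constant $C^i$. It is also uniform in $i$ because $N$ is finite. Hence \eqref{eq:sigma-star} holds for all $t$ with $\varepsilon^i>0$, and $\mathbf{a}\in\mathcal{A}^\ast$.

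The one point needing care is the normalisation. \eqref{eq:sigma-star} is stated in unnormalised sums, with $\varepsilon^i$ not scaled by $(1-\delta)$, whereas Lemma~\ref{lem:FM} controls normalised values. Multiplying by $(1-\delta)$ shows that the unnormalised margin must be $(1-\delta)\varepsilon^i$ in normalised terms. The natural unnormalised margin is $\delta\eta/(1-\delta) - C^i$, which is large, so any fixed $\varepsilon^i$ works for $\delta$ near $1$. I therefore expect no genuine obstacle beyond applying Lemma~\ref{lem:FM} correctly with the chosen $\eta$ and bounding the stage-payoff differences by $C^i$.
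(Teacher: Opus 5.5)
Your proof is correct and follows the paper's argument: take the sequence from Lemma~\ref{lem:FM}, apply its continuation bound at date $t+1$ to get a normalised gap of at least $\eta$ over $r^{i,\mathrm{NE}}$, bound the one-shot deviation gain by $C^i$, and let $\delta \to 1$. The differences are cosmetic. You work with normalised values and fix the margin at $\varepsilon^i = \eta$. The paper works with unnormalised sums and takes the larger, $\delta$-dependent margin $\varepsilon^i = \delta\eta/(1-\delta) - C^i$.
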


\begin{proof}
Let $\underline{\delta}_\eta$ be as in Lemma~\ref{lem:FM}, fix $\delta > \underline{\delta}_\eta$, and let $\mathbf{a}$ be the sequence Lemma~\ref{lem:FM} provides for $v$ and $\eta$, so that $r^{i,\mathrm{coop}}_t = r^i(a_t)$ for every $i$ and $t$. The final claim of the proposition is then the continuation bound of Lemma~\ref{lem:FM}, and it remains to check that $\mathbf{a} \in \mathcal{A}^\ast$. Applying that bound at date $t+1$,
\begin{align*}   (1-\delta) \sum_{\tau = t+1}^{\infty} \delta^{\tau - t - 1} r^{i,\mathrm{coop}}_\tau   \;\ge\; v^i - \eta \;\ge\; r^{i,\mathrm{NE}} + \eta ,\end{align*}
the second inequality because $v^i - r^{i,\mathrm{NE}} \ge 2\eta$. Equivalently, for every $i \in N$ and every $t \ge 1$,
\begin{align}\label{eq:gap-lower}   \sum_{\tau = t+1}^{\infty} \delta^{\tau - t - 1} \big( r^{i,\mathrm{coop}}_\tau - r^{i,\mathrm{NE}} \big)   \;\ge\; \frac{\eta}{1-\delta} .\end{align}
Now rewrite \eqref{eq:sigma-star}. Its left-hand side equals $r^{i,\mathrm{coop}}_t + \delta \sum_{\tau \ge t+1} \delta^{\tau-t-1} r^{i,\mathrm{coop}}_\tau$, its Nash-reversion term equals $\delta\, r^{i,\mathrm{NE}}/(1-\delta)$, and
$\max\{ r^{i,\mathrm{coop}}_t, r^{i,\mathrm{dfct}}_t \} = r^{i,\mathrm{coop}}_t + \max\{ 0,\, r^{i,\mathrm{dfct}}_t - r^{i,\mathrm{coop}}_t \}$.
Cancelling $r^{i,\mathrm{coop}}_t$ from both sides shows that \eqref{eq:sigma-star} holds at date $t$ if and only if
\begin{align}\label{eq:sigma-star-equiv}    \delta \sum_{\tau = t+1}^{\infty} \delta^{\tau - t - 1} \big( r^{i,\mathrm{coop}}_\tau - r^{i,\mathrm{NE}} \big)    \;\ge\; \max\big\{ 0,\, r^{i,\mathrm{dfct}}_t - r^{i,\mathrm{coop}}_t \big\} + \varepsilon^i \end{align}
By \eqref{eq:gap-lower} the left-hand side of \eqref{eq:sigma-star-equiv} is at least $\delta \eta / (1-\delta)$, uniformly in $t$, while the first term on the right-hand side is at most $C^i$, uniformly in $t$ and independently of $\delta$. Enlarge $\underline{\delta} \ge \underline{\delta}_\eta$ so that $\delta \eta/(1-\delta) > C^i$ for every $i \in N$, which is possible since $\delta/(1-\delta) \to \infty$ as $\delta \to 1$, and set $\varepsilon^i := \delta \eta/(1-\delta) - C^i > 0$. Then \eqref{eq:sigma-star-equiv}, and hence \eqref{eq:sigma-star}, holds at every date with the margin $\varepsilon^i$, which does not depend on $t$. Therefore $\mathbf{a} \in \mathcal{A}^\ast$, and its normalised discounted payoff is $v$ by the first part of Lemma~\ref{lem:FM}.
\end{proof}

Proposition~\ref{prop:folk} is a pure-strategy sharpening, with deviator-specific threats, of a classical result. Theorem~C of \citet{fudenberg1986folk}, attributed there to \citet{friedman1971non}, states that if a feasible, individually rational payoff vector $v$ Pareto dominates the payoff vector of a one-shot Nash equilibrium $(e_1, \ldots, e_n)$ of the stage-game, then for $\delta$ close enough to one the infinitely repeated game has a perfect equilibrium with average payoff $v$, sustained by exactly the strategies we have in mind: play the actions that sustain $v$ until someone deviates, and play $(e_1, \ldots, e_n)$ forever thereafter. Indeed, combining Proposition~\ref{prop:folk} with the one-shot deviation principle recovers that conclusion, since every $\mathbf{a} \in \mathcal{A}^\ast$ makes the grim Nash-threat profile a subgame perfect equilibrium of $G^\infty_\delta$.

\section{Main result}\label{sec:main}

Theorem~\ref{thm:main} to follow is the central result of the paper. It establishes that, for any infinite-horizon action sequence $\mathbf{a}^\ast \in \mathcal{A}^\ast$ that satisfies condition~\eqref{eq:sigma-star}, there is a subgame perfect equilibrium of the $T$-period augmented game with deposits and terminal payoffs, whose equilibrium play is the truncated action history $a^\ast_{1:T}$. Specifically, for all $\mathbf{a}^\ast\in \mathcal{A}^\ast$ and all sufficiently long horizons $T$, we exhibit a deposit scheme $d$ and a terminal-payoff rule $w$ under which playing $a^\ast_{1:T}$ and paying the prescribed deposit at each period is the outcome of a subgame perfect equilibrium of $G^T_\delta(d, w)$.

To define the candidate equilibrium we first need to identify, at any full history $h_t\in H_t$, whether the play has left the equilibrium path and, if so, who left it first. Fix a target action sequence $\mathbf{a}^\ast \in \mathcal{A}^\ast$, and let $h_t^\ast$ denote the on-path full history, i.e. the unique history of length $t$ generated when every player plays the target actions $a^\ast_{1:t}$ and pays the prescribed deposits. Denote by $a^{\ast,i}_t$ the action prescribed to player $i$ at period $t$ according to $\mathbf{a}^\ast$. For an arbitrary full history $h_t\in H_t$, let $i(h_t)$ denote the first player to deviate from the target action sequence or from a prescribed deposit, with ties broken by lowest index. If no player has deviated, then $h_t = h_t^\ast$ and we say $h_t$ is on path.

For a finite time horizon $T$, the equilibrium action strategy profile $\tilde{\sigma}=\tilde{\sigma}_T$ plays the actions of $a^\ast_{1:T}$ on the equilibrium path and switches permanently to ${\ra}^{\mathrm{pun}}(i(h_t))$, the stage-game Nash equilibrium minimising the first deviator's payoff, after any deviation. For every history $h_t$ with $t \leq T$,
\begin{align*}
\tilde{\sigma}_t(h_t) =
\begin{cases}
a^{\ast,i}_t & \text{if } h_t = h_t^\ast, \\[2pt]
{\ra}^{\mathrm{pun}}(i(h_t)) & \text{otherwise.}
\end{cases}
\end{align*}
The deposit rule $\tilde{\rho}$ prescribes the payment of the deposit on the equilibrium path and no payment once any deviation has occurred:
\begin{align*}
\tilde{\rho}_t(h_t) =
\begin{cases}
1 & \text{if } h_t = h_t^\ast, \\[2pt]
0 & \text{otherwise.}
\end{cases}
\end{align*}
The terminal-payoff function returns the reward $\bar{w}^i$ to every player except the first deviator, capped at the deposits that player has themselves paid. Writing $i(h_T)$ for the first deviator, and for some $\bar{w}^i \ge 0$ described in the following proof, we set
% CHANGED (M7): align* -> align, label restored; rejected alternative was an unnumbered align*
\begin{align} \label{eq:terminal_payoff}
w^i(h_T) =
\begin{cases}
\bar{w}^i & \text{if } h_T = h_T^\ast, \\[2pt]
0 & \text{if } h_T \neq h_T^\ast \text{ and } i = i(h_T), \\[2pt]
\min\Big\{ \bar{w}^i,\; \sum_{t=1}^{T} p^i_t d^i_t \Big\} & \text{if } h_T \neq h_T^\ast \text{ and } i \neq i(h_T).
\end{cases}
\end{align}
On the equilibrium path each player pays $d_t^i$ in every period, so the accumulated stake is $\sum_t d_t^i$ and the feasibility condition $w^i(h_T) \le \sum_{t=1}^T p_t^i d_t^i$ reduces to $\bar{w}^i \le \sum_t d_t^i$. On any other history the refund is capped at the deposits the player has actually paid, so feasibility is immediate there too. In particular the mechanism is self-funded on every history, no player forfeits their stake on account of another player's deviation, and no player receives more off the equilibrium path than on it.

\begin{theorem} \label{thm:main}
For $\delta \in (0,1)$, let $G^\infty_\delta$ be an infinitely repeated game. Let $\mathbf{a}^\ast\in\mathcal{A}^\ast$ be a sequence of action-profiles of $G^\infty_\delta$, and let $(\varepsilon^i)_{i \in N}$ be positive numbers satisfying condition \eqref{eq:sigma-star}. There is a finite threshold $T^\ast$, depending only on the stage-game $G$, the discount factor $\delta$ and the margins $(\varepsilon^i)_{i \in N}$, such that for every horizon $T \ge T^\ast$ there exist a deposit scheme $d=(d^i_t)_{i\in N, t\in [T]}$ and a terminal-payoff rule $w = (w^i)_{i \in N}$ for which the truncated profile $(\tilde{\sigma}, \tilde{\rho})$ is a subgame perfect equilibrium of $G_{\delta}^T(d, w)$. 

%We also note that, along the equilibrium path, $(\tilde{\sigma}, \tilde{\rho})$ induces the same action profile as $\sigma$ at every period $t \le T$.
\end{theorem}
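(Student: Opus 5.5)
We verify the equilibrium property through the one-shot deviation principle for the finite game $G^T_\delta(d,w)$. It suffices to check, at every history $h_{t-1}$ and for every player $i$, two things: that a single change in the action at period $t$ does not pay, and that a single change in the deposit decision at period $t$ (after any action profile $a_t$) does not pay, with conformity to $(\tilde\sigma,\tilde\rho)$ afterwards. I would use a stationary scheme:
\begin{align*}
d^i_t := d^i := \varepsilon^i(1-\delta) \quad \text{for every } t \in [T], \qquad \bar w^i := \sum_{t=1}^T d^i_t = T d^i .
\end{align*}
With this choice, feasibility on the equilibrium path holds with equality, and off the path it holds by the cap in \eqref{eq:terminal_payoff}. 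The threshold will be
\begin{align*}
T^\ast \ge \max_i \frac{C^i}{\varepsilon^i(1-\delta)^2},
\end{align*}
possibly enlarged by a constant. It depends only on $G$, $\delta$ and $(\varepsilon^i)$, as the statement requires.

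\emph{Off-path histories.} Once some player has deviated, the continuation prescribes ${\ra}^{\mathrm{pun}}(i(h))$ forever, with no deposits. The identity of the first deviator is already fixed, so future play does not react to current actions. A one-shot action deviation therefore only changes the current stage payoff, and this cannot help because ${\ra}^{\mathrm{pun}}$ is a stage Nash equilibrium. For deposits, the first deviator gets $w=0$ regardless, so paying is strictly costly. Any other player $j$ who pays $d^j$ at period $t$ raises $w^j$ by at most $d^j$, which is received at $T$. That is worth at most $\delta^{T-t}d^j \le d^j$ in period-$t$ terms, so not paying is weakly optimal (strictly if $t<T$).

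\emph{On-path deviations.} Let $\Delta^i_t := \sum_{\tau=t+1}^T \delta^{\tau-t}(r^{i,\mathrm{coop}}_\tau - r^{i,\mathrm{NE}})$ denote the finite-horizon continuation gap. From \eqref{eq:sigma-star}, the tail beyond $T$ has value at most $\delta^{T-t+1}C^i/(1-\delta) \le \delta^{T-t}C^i/(1-\delta)$. Hence
\begin{align*}
\Delta^i_t \ge \max\{0,\, r^{i,\mathrm{dfct}}_t - r^{i,\mathrm{coop}}_t\} + \varepsilon^i - \delta^{T-t}\frac{C^i}{1-\delta}.
\end{align*}
Consider first a deviation in action at period $t$. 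The deviator gains at most $\max\{0,\, r^{i,\mathrm{dfct}}_t - r^{i,\mathrm{coop}}_t\}$ today. The deviator also saves the deposits $\sum_{\tau=t}^T\delta^{\tau-t}d^i \le d^i/(1-\delta) = \varepsilon^i$. Against this, the deviator loses $\Delta^i_t$ and the refund $\delta^{T-t}\bar w^i$. Combining these terms, the deviation is unprofitable provided
\begin{align*}
\varepsilon^i + \delta^{T-t}\Big(T d^i - \frac{C^i}{1-\delta}\Big) \ge \varepsilon^i ,
\end{align*}
and this holds whenever $T\ge T^\ast$. A deposit deviation at $t$, after the prescribed action, is handled by the same inequality with zero stage gain. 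The deviator saves at most $\varepsilon^i$ of deposits but loses $\delta^{T-t}\bar w^i$ together with $\Delta^i_t \ge \varepsilon^i - \delta^{T-t}C^i/(1-\delta)$. The case $t=T$ is included: there the deviator saves $d^i$ and loses $\bar w^i = Td^i \ge d^i$.

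The step I expect to be the main obstacle is the bookkeeping in this last part. Two points need care. First, the deposit saving $\varepsilon^i$ must be absorbed exactly by the margin $\varepsilon^i$ from \eqref{eq:sigma-star}; if equality is not enough one can halve $d^i$. Second, the truncation error near the end of the horizon must be covered by the refund, using that $\bar w^i$ grows linearly in $T$ while the present cost of the remaining deposits stays bounded. I would also double-check the case where the deviation at $t$ is the action of a player other than the one whose deposit we are checking, so that the correct first deviator is used in \eqref{eq:terminal_payoff}.
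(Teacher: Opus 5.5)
Your proof is correct, and it takes a different and simpler route than the paper's. The paper sets $d^i = \tfrac12(1-\delta)\varepsilon^i$, so the deposits use up only half of the margin, and it chooses a terminal reward $\bar w^i$ that does not depend on $T$. It then splits the horizon into two parts. For $t \le T - M^i$, the remaining slack $\varepsilon^i/2$ alone absorbs the truncation error $\delta^{T+1-t}C^i/(1-\delta)$. In the final $M^i$ periods, $\bar w^i = \delta^{-(M^i-1)}\big(2\bar r^i/(1-\delta) + \varepsilon^i/2\big)$ is calibrated against a crude worst-case lower bound on the slack. Finally, $T^\ast$ is chosen so that $Td^i \ge \bar w^i$.

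You instead let the deposits consume the entire margin, with $d^i = (1-\delta)\varepsilon^i$, and refund the full stake, $\bar w^i = Td^i$. This is admissible because $T$ is fixed before $w$ is chosen. Your argument rests on one observation: a deviator at date $t$ also forfeits the deposits already paid. The lost refund is therefore worth $\delta^{T-t}Td^i$, which carries the same factor $\delta^{T-t}$ as the truncation error. The single condition $Td^i \ge C^i/(1-\delta)$ then closes the incentive constraint at every date at once. There is no case split, and you need neither $M^i$ nor $\bar r^i$.

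Your route buys two things:
\begin{itemize}
\item A smaller threshold. Yours is of order $C^i/\big(\varepsilon^i(1-\delta)^2\big)$, whereas the paper's $\lceil \bar w^i/d^i\rceil$ is of order $C^i\bar r^i/\big((\varepsilon^i)^2(1-\delta)^3\big)$.
\item Every player recovers the whole stake on the equilibrium path. Lemma~\ref{lem:cost} shows that any cost-minimising scheme has this structure. The paper's scheme, by contrast, retains $Td^i - \bar w^i$ on path once $T > T^\ast$.
\end{itemize}

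The paper's route buys a terminal reward bounded independently of the horizon. It also makes visible that the refund matters only in the last $M^i$ periods, which connects to Proposition~\ref{prop:wreq} and to the prisoner's dilemma computation.

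One small point: your fallback of halving $d^i$ is unnecessary. The one-shot deviation principle needs only weak inequality, and your bound already holds with equality as the worst case.
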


\begin{proof}
Throughout, fix $\mathbf{a}^\ast\in\mathcal{A}^\ast$ and margins $(\varepsilon^i)_{i \in N}$ satisfying \eqref{eq:sigma-star}. We first define several constants, none of which depends on the horizon $T$, and then verify subgame perfection through the one-shot deviation principle.

For each player $i \in N$ we set
\begin{align*}
\bar{r}^i := \max_{a \in A} |r^i(a)|, \qquad C^i := \max_{a \in A} r^i(a) - \min_{a \in A} r^i(a),
\end{align*}
where $C^i$ is as defined in Section~\ref{subsec:folk}, so that $\bar{r}^i$ bounds every stage payoff and $|r^{i,\mathrm{coop}}_t - r^{i,\mathrm{NE}}| \le C^i$ for every $t$. Note that $C^i > 0$: if player $i$'s stage payoff were constant, both sides of \eqref{eq:sigma-star} would coincide and the condition would read $0 \ge \varepsilon^i$, contradicting $\varepsilon^i > 0$. Take the per-period deposit to be constant, $d^i_t := d^i := \frac{1}{2}(1-\delta)\,\varepsilon^i > 0$ for all $t$. We obtain that, for any $T$, the discounted cost of depositing is bounded as follows:
\begin{align}\label{eq:deposit-bound}
\sum_{\tau=t}^T \delta^{\tau-t} d^i \;\le\; \frac{d^i}{1-\delta} = \frac{\varepsilon^i}{2}, \qquad t \le T.
\end{align}
Note that the bound $\frac{\varepsilon^i}{2}$ is independent of $T$. Set
\begin{align*}
M^i := \max\left\{ 0,\; \left\lceil \frac{\log\!\big( 2 C^i / ((1-\delta)\varepsilon^i) \big)}{\log(1/\delta)} \right\rceil \right\},
\end{align*}
which is finite, independent of $T$, and satisfies $\delta^{M^i} \le (1-\delta)\varepsilon^i / (2C^i)$. Set $\bar{w}^i := 0$ if $M^i = 0$, and
\begin{align*}
\bar{w}^i := \delta^{-(M^i-1)}\left( \frac{2\bar{r}^i}{1-\delta} + \frac{\varepsilon^i}{2} \right) \quad \text{if } M^i \ge 1,
\end{align*}
and finally
\begin{align*}
T^\ast := \max_{i \in N} \max\left\{ M^i,\; \big\lceil \bar{w}^i / d^i \big\rceil \right\}.
\end{align*}
We now fix an arbitrary horizon $T \ge T^\ast$, and let $w$ be the terminal-payoff rule defined in Equation \eqref{eq:terminal_payoff}. We show that $(\tilde{\sigma}, \tilde{\rho})$ is a subgame perfect equilibrium of $G^T_\delta(d, w)$.

The game $G^T_\delta(d, w)$ is a finite multistage game with observed actions, so by the one-shot deviation principle \cite[Theorem 4.1]{fudenberg1991game}, $(\tilde{\sigma}, \tilde{\rho})$ is subgame perfect if and only if no player can profitably unilaterally deviate at one time-period while conforming thereafter. We rule out such deviations off the path and on the path in turn.

Consider a history $h_t$ that already contains a deviation, so $\tilde{\sigma}$ prescribes the stage-game Nash profile ${\ra}^{\mathrm{pun}}(i(h_t))$, while $\tilde{\rho}$ prescribes no payment, and each player's terminal refund depends only on their own deposit decisions and on whether they were the first to deviate. At the action node the prescribed action is player $i$'s component of a stage-game Nash equilibrium, hence a best reply to the other players' prescribed actions; since the continuation play and the terminal refund do not depend on the current action, no action deviation is profitable. At the deposit node, paying $d^i$ now raises the terminal refund by at most $d^i$, and any such increase arrives only at date $T$, so the payment costs at least $(1-\delta^{T-t})d^i \ge 0$ and withholding, as prescribed, is optimal. The cost is strictly positive at every $t < T$, and also at $t = T$ for the first deviator, whose refund is zero in any case. The single exception is a player other than the first deviator, at the final date, whose accumulated deposits, this payment included, would still not exceed $\bar{w}^i$, and who is then exactly indifferent between paying and withholding; since the one-shot deviation principle requires only weak optimality, the conclusion is unaffected. No off-path deviation is therefore profitable.

Now consider, for some $t \leq T$, a history $h^\ast_t$ that does not contain a deviation. An action deviation by player $i$ earns that player the stage payoff of some $a^i \neq a^{\ast,i}_t$ against $a^{\ast,-i}_t$, which is at most $r^{i,\mathrm{dfct}}_t$; the deviation is observed, so play reverts to ${\ra}^{\mathrm{pun}}(i)$ from period $t+1$ with zero terminal payoff, and this continuation is the same whether or not player $i$ pays the period-$t$ deposit. It is therefore strictly better for a player to deviate at both the action stage and the deposit stage than at the action stage alone, since withholding the deposit saves $d^i > 0$ and leaves the continuation unchanged. When considering action deviations, it thus suffices to deter this combined deviation, which earns to the deviator at most $r^{i,\mathrm{dfct}}_t$ in period $t$, forgoes the period-$t$ deposit, and yields continuation value $\sum_{\tau=t+1}^T \delta^{\tau-t} r^{i,\mathrm{NE}}$. If a player does not deviate at the action stage, but only at the deposit stage, they earn $r^{i,\mathrm{coop}}_t$, withhold the deposit, and trigger the same reversion to ${\ra}^{\mathrm{pun}}(i)$ with zero terminal payoff.

These two deviations share the continuation value $\sum_{\tau=t+1}^T \delta^{\tau-t} r^{i,\mathrm{NE}}$ and both forgo the period-$t$ deposit; they differ only in the period-$t$ stage payoff, which is $r^{i,\mathrm{dfct}}_t$ in the first and $r^{i,\mathrm{coop}}_t$ in the second. Writing $D^i_t := \max\{ r^{i,\mathrm{coop}}_t, r^{i,\mathrm{dfct}}_t \}$, the most profitable deviation in period $t$ is worth $D^i_t + \sum_{\tau=t+1}^T \delta^{\tau-t} r^{i,\mathrm{NE}}$, while conforming at both nodes yields the on-path value $\sum_{\tau=t}^T \delta^{\tau-t}\big( r^{i,\mathrm{coop}}_\tau - d^i \big) + \delta^{T-t}\bar{w}^i$. Deterring both deviations is therefore equivalent to the single inequality
\begin{align}\label{eq:IC}
\underbrace{\sum_{\tau=t}^T \delta^{\tau-t}\big( r^{i,\mathrm{coop}}_\tau - d^i \big)}_{\text{on-path payoff net of deposits}} \;+\; \underbrace{\delta^{T-t}\, \bar{w}^i}_{\text{terminal reward}} \;\ge\; \underbrace{D^i_t + \sum_{\tau=t+1}^T \delta^{\tau-t}\, r^{i,\mathrm{NE}}}_{\text{best one-shot deviation}}.
\end{align}

We establish \eqref{eq:IC} at every on-path period $t \le T$. We start by considering the truncated cooperation and deviation values, defined as follows       
\begin{align} \label{eq:Rcoop_Rdfct}
    R^{i,\mathrm{coop}}_t(T)
= \sum_{\tau=t}^T \delta^{\tau-t} r^{i,\mathrm{coop}}_\tau,
\hspace{2cm}
R^{i,\mathrm{dfct}}_t(T)
= D^i_t + \sum_{\tau=t+1}^T \delta^{\tau-t} r^{i,\mathrm{NE}}.
\end{align}
By \eqref{eq:sigma-star}, for all $t \in \mathbb{N}$,
\begin{align*}
    \sum_{\tau=t}^{\infty} \delta^{\tau-t} r^{i,\mathrm{coop}}_\tau
\ge
D^i_t + \sum_{\tau=t+1}^{\infty} \delta^{\tau-t} r^{i,\mathrm{NE}} + \varepsilon^i,
\end{align*}
and therefore
\begin{align*}
    R^{i,\mathrm{coop}}_t(T) - R^{i,\mathrm{dfct}}_t(T)
& \ge
\varepsilon^i - \sum_{\tau=T+1}^{\infty} \delta^{\tau-t} \big(r^{i,\mathrm{coop}}_\tau - r^{i,\mathrm{NE}}\big) \\
& \ge \varepsilon^i - \frac{\delta^{T+1-t}}{1-\delta}\, C^i,
\end{align*}
where $C^i$ is as above. Defining
\begin{align}\label{ineq:slack}
\Phi_t^i(T) := R_t^{i,\mathrm{coop}}(T) - \sum_{\tau=t}^T \delta^{\tau-t} d^i - R_t^{i,\mathrm{dfct}}(T),
\end{align}
we obtain that \eqref{eq:IC} is equivalent to $\Phi_t^i(T) + \delta^{T-t}\bar{w}^i \ge 0$. Furthermore,  \eqref{eq:deposit-bound} and the calculations above give us that $\Phi_t^i(T) \ge \varepsilon^i/2 - \delta^{T+1-t} C^i/(1-\delta)$. 

For $t \le T - M^i$ we have $T + 1 - t \ge M^i + 1$, so by the choice of $M^i$,
\begin{align*}
\delta^{\,T+1-t} \;\le\; \delta^{M^i} \;\le\; \frac{(1-\delta)\varepsilon^i}{2C^i},
\end{align*}
so $\Phi^i_t(T) \ge 0$. Since $\bar{w}^i \ge 0$, inequality \eqref{eq:IC} then holds at every such period independently of the terminal reward. At these periods the margin $\varepsilon^i$ alone outweighs both the truncation tail and the bounded deposit cost, so no terminal payoff is required; a terminal payoff can be needed only over the final $M^i$ periods $t \in \{T - M^i + 1, \dots, T\}$. In particular, if $M^i = 0$ then no terminal payoff is needed for player $i$, and we set $\bar{w}^i = 0$. We assume $M^i \ge 1$ in what follows.

For any $t \le T$ the truncated values satisfy $R_t^{i,\mathrm{coop}}(T) \ge -\bar{r}^i/(1-\delta)$ and, since $D^i_t \le \bar r^i$, also $R_t^{i,\mathrm{dfct}}(T) \le \bar{r}^i/(1-\delta)$, so by \eqref{ineq:slack} and \eqref{eq:deposit-bound},
\begin{align*}
\Phi_t^i(T) \;\ge\; -\frac{\bar{r}^i}{1-\delta} - \frac{\varepsilon^i}{2} - \frac{\bar{r}^i}{1-\delta} = -\left( \frac{2\bar{r}^i}{1-\delta} + \frac{\varepsilon^i}{2} \right).
\end{align*}
For $t \in \{T - M^i + 1, \dots, T\}$ we have $T - t \le M^i - 1$, hence $\delta^{T-t} \ge \delta^{M^i-1}$, and so
\begin{align*}
\Phi_t^i(T) + \delta^{T-t}\bar{w}^i \ge \Phi_t^i(T) + \delta^{M^i-1}\bar{w}^i \ge 0,
\end{align*}
which is exactly \eqref{eq:IC}. Together with the earlier periods $t \le T - M^i$, this establishes \eqref{eq:IC} at every period $t \le T$.

It remains to verify that the accumulated deposits can finance $\bar{w}^i$. The constant deposit scheme accumulates $\sum_{t=1}^T d^i = T d^i$, growing linearly with the horizon. In particular, the feasibility condition $\bar{w}^i \le T d^i$ holds for every $T \geq T^\ast$. 

This is the asymmetry at the heart of the construction: the discounted cost of the deposit stream is bounded by $\varepsilon^i/2$ for every $T$, while the undiscounted stake $T d^i$ grows without bound, so for $T$ large enough it funds the constant terminal payoff $\bar{w}^i$. 

For $T \ge T^\ast$, no on-path action or deposit deviation, and no off-path deviation, is profitable, so by the one-shot deviation principle $(\tilde{\sigma}, \tilde{\rho})$ is a subgame perfect equilibrium of $G^T_\delta(d, w)$. By construction $\tilde{\sigma}$ prescribes the action profile of $\mathbf{a}^\ast$ at every on-path period $t \le T$.
\end{proof}

The threshold $T^\ast$ is sufficient for the existence of a feasible deposit scheme, but not necessary, and it is far from tight. It is driven by the ratio $\bar{w}^i / d^i$, which becomes large as $\delta$ approaches one, and the reason is that the constant deposit scheme is the crudest admissible one. Solving the design program of Section~\ref{sec:optimal} instead yields the horizons reported in Figure~\ref{fig:PD}, which are smaller by orders of magnitude, so the gap between $T^\ast$ and the minimal horizon is a property of the existence argument rather than of the mechanism.

We remark that the construction leaves free the disposal of the stake forfeited by the deviator. The only requirement is that the deviator does not recover the stake; it may be divided among the remaining players, or given to the neutral agent operating the deposit mechanism.

Theorem~\ref{thm:main} reproduces the on-path actions of any sequence in $\mathcal{A}^\ast$ as the sequence of play of a subgame perfect equilibrium of a finitely repeated game, and hence also reproduces its entire stage-payoff sequence. The realised payoffs differ from the infinite-horizon benchmark by the discounted cost of the deposits, which the proof bounds by $\varepsilon^i/2$ for each player and which can be driven to zero by lengthening the horizon while shrinking the per-period deposit accordingly. In this sense the deposit mechanism recovers the Nash-threat folk theorem in finite time: every feasible payoff profile that strictly dominates, for each player, that player's worst stage-game Nash payoff, and that can be supported by Nash threats with a strict margin, is attained, up to a vanishing deposit cost, by a subgame perfect equilibrium of a finitely repeated game augmented with deposits.

Our construction can be applied to an arbitrary stage-game with any finite number of players and it requires no transfers between players. Since this is where the commitment requirement of the construction resides, it is worth setting out what the intermediary must be able to do. It must observe the action profile and the deposit decisions, hold the deposits, honour the refund rule $w$ announced before play begins, and remain solvent, the last of which is immediate because the rule never returns to a player more than that player has deposited. The intermediary takes no action of its own, receives no payoff, and under the rule of Theorem~\ref{thm:main} moves no funds between players. Within the game the deposits are voluntary and incentive compatible at every stage, so the mechanism is self-enforcing. Moreover, our proof is constructive, delivering an explicit horizon threshold $T^\ast$, deposit $d^i$, and terminal payoffs $\bar{w}^i$ for any target profile.

The threat used in the proof, permanent reversion to the deviator's worst stage-game Nash equilibrium, is the harshest available among stage-game Nash equilibria and is convenient because it is, by definition, self-enforcing. It is not, however, the only possibility. The deposit mechanism can sustain a punishment continuation in the same way that it sustains the cooperative path, which frees the designer from confining threats to stage-game Nash equilibria. Section~\ref{subsec:public_goods_game} exploits this in a public goods setting, where reverting to universal defection is collectively wasteful and a gentler continuation, itself supported by deposits, sustains cooperation at a lower social cost off the equilibrium path.

The proof establishes existence using the simplest admissible scheme, a constant deposit, but the conditions sufficient for the proof to work are met by a larger family of schemes. Because deposits are costly in present value, it is better to collect them as late as the incentive constraints allow. In the next section, we discuss cost-minimising schemes that sustain this form of cooperation through a problem that is linear in the deposits and the terminal payoffs. Finally, although the theorem is stated for repeated stage-games, its argument rests only on a uniform bound on the one-shot deviation gain along the target path, not on stationarity of the payoffs. Section~\ref{subsec:tragedy_of_the_commons} uses this to carry the mechanism over to a dynamic common-pool resource with an endogenous stock, where the stage-game itself evolves with the state.

\subsection{A Nash-threat folk theorem with deposits}\label{subsec:folk-finite}

Proposition~\ref{prop:folk} and Theorem~\ref{thm:main} combine directly. The first says that every payoff in $F^{\mathrm{NE}}$ is generated by some sequence in $\mathcal{A}^\ast$; the second says that the first $T$ terms of any such sequence are the equilibrium play of the augmented finitely repeated game. What remains is to bound the two quantities that separate the realised finite-horizon payoff from the target: the truncation of the sequence at date $T$, and the discounted cost of the deposits.

\begin{corollary}[Nash-threat folk theorem with deposits]\label{cor:folk}
Let $v \in F^{\mathrm{NE}}$ and let $\xi > 0$. There exists $\underline{\delta} < 1$ such that for every $\delta \in (\underline{\delta}, 1)$ there is a threshold $T^\ast$ with the following property: for every horizon $T \ge T^\ast$ there are a deposit scheme $d$ and a terminal-payoff rule $w$, satisfying the feasibility condition $w^i(h_T) \le \sum_{t=1}^{T} p^i_t d^i_t$, such that $G^T_\delta(d,w)$ admits a subgame perfect equilibrium whose normalised on-path payoff vector lies within $\xi$ of $v$.
\end{corollary}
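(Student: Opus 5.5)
The plan is to chain Proposition~\ref{prop:folk} and Theorem~\ref{thm:main}, and then bound the two errors named just before the corollary: the truncation at date $T$ and the discounted deposit cost. Fix $v \in F^{\mathrm{NE}}$ and $\xi > 0$, and set $\eta := \tfrac12 \min_i (v^i - r^{i,\mathrm{NE}})$. Proposition~\ref{prop:folk} supplies $\underline{\delta}$ such that, for every $\delta \in (\underline{\delta},1)$, there is a sequence $\mathbf{a}^\ast \in \mathcal{A}^\ast$ with $(1-\delta)\sum_{t\ge1}\delta^{t-1} r(a^\ast_t) = v$ and margins $\varepsilon^i > 0$. Fix such a $\delta$ and $\mathbf{a}^\ast$. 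Theorem~\ref{thm:main} then gives a threshold $T_0$ such that, for every $T \ge T_0$, the profile $(\tilde\sigma,\tilde\rho)$ is a subgame perfect equilibrium of $G^T_\delta(d,w)$, with constant deposits $d^i = \tfrac12(1-\delta)\varepsilon^i$ and feasibility built into \eqref{eq:terminal_payoff}.

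The on-path normalised payoff of player $i$ is
\[
(1-\delta)\Big[\sum_{t=1}^T \delta^{t-1}\big(r^i(a^\ast_t) - d^i\big) + \delta^{T-1}\bar w^i\Big].
\]
Comparing with $v^i$ leaves three terms. The first is the truncation tail $(1-\delta)\sum_{t>T}\delta^{t-1} r^i(a^\ast_t)$, which is at most $\delta^T \bar r^i$ in absolute value. The second is the deposit cost, which by \eqref{eq:deposit-bound} is at most $(1-\delta)\varepsilon^i/2$. The third is the terminal refund $(1-\delta)\delta^{T-1}\bar w^i$, which is nonnegative and partly offsets the deposits; it could only push the payoff above $v^i$. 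I bound the deviation in absolute value, so the refund adds at most $(1-\delta)\delta^{T-1}\bar w^i$. Since $\bar w^i$ does not depend on $T$, this also tends to $0$ as $T\to\infty$. The truncation tail likewise vanishes as $T$ grows.

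The main obstacle is the deposit-cost term $(1-\delta)\varepsilon^i/2$, which does not vanish in $T$ for fixed margins. I would remove it by shrinking the margin. Condition~\eqref{eq:sigma-star} holding with $\varepsilon^i$ implies it holds with any smaller positive $\varepsilon'^i \le \varepsilon^i$. I would therefore apply Theorem~\ref{thm:main} with $\varepsilon'^i := \min\{\varepsilon^i, \xi/(1-\delta)\}$, which makes the deposit cost at most $\xi/2$. This is legitimate because the theorem only requires positive numbers satisfying \eqref{eq:sigma-star}. Its threshold $T_0(\varepsilon')$ grows, since both $M^i$ and $\bar w^i/d^i$ increase as the margin shrinks, but it remains finite.

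Finally, choose $T^\ast \ge T_0(\varepsilon')$ large enough that, for all $i$, $\delta^{T^\ast}\bar r^i + (1-\delta)\delta^{T^\ast-1}\bar w^i < \xi/2$. This is possible because $\bar w^i$ is fixed once $\varepsilon'$ is fixed. For every $T \ge T^\ast$ the three terms then sum to less than $\xi$ for each coordinate, which gives a normalised payoff within $\xi$ of $v$ in the sup norm (rescale $\xi$ by a constant if another norm is intended). Feasibility $w^i(h_T) \le \sum_t p^i_t d^i_t$ is inherited directly from Theorem~\ref{thm:main}.
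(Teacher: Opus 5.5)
Your proof is correct and follows the paper's argument. Both proofs chain Proposition~\ref{prop:folk} with Theorem~\ref{thm:main}, shrink the margins $\varepsilon^i$ (legitimate because \eqref{eq:sigma-star} survives smaller margins) so that the deposit cost is at most $\xi/2$, and enlarge $T$ to remove the truncation error. The differences are only bookkeeping: you normalise by $(1-\delta)$ rather than the paper's $(1-\delta)/(1-\delta^T)$, which differs by a factor tending to $1$, so a larger $T$ covers either convention; you bound the tail crudely by $\delta^T\bar r^i$ instead of via the continuation bound $\eta$; and you carry the refund $(1-\delta)\delta^{T-1}\bar w^i$ as a separate vanishing term, whereas the paper just bounds the deposit cost by $\varepsilon^i/2$, which is valid net of the refund since $\bar w^i\le Td^i$.
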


\begin{proof}
Let $\underline{\delta}$, $\eta$ and $\mathbf{a} \in \mathcal{A}^\ast$ be as in Proposition~\ref{prop:folk}, with associated margins $(\varepsilon^i)_{i \in N}$, and observe that \eqref{eq:sigma-star} continues to hold when each $\varepsilon^i$ is replaced by any smaller positive number.

Consider first the truncation. Writing $v^i = (1-\delta)\sum_{t=1}^{T} \delta^{t-1} r^{i,\mathrm{coop}}_t + \delta^{T} \omega^i_T$, where $\omega^i_T$ denotes the normalised value of the discarded tail, the normalised payoff of the truncated path is $(v^i - \delta^T \omega^i_T)/(1-\delta^T)$ and therefore differs from $v^i$ by at most $\eta\, \delta^{T}/(1-\delta^{T})$, since $\lvert \omega^i_T - v^i \rvert \le \eta$ by Proposition~\ref{prop:folk}. This is where the uniform continuation bound pays for itself: the discarded tail is worth almost exactly what the whole path is worth, so no separate approximation argument is needed for the finite horizon.

Consider next the deposits. The proof of Theorem~\ref{thm:main} bounds their total discounted cost by $\varepsilon^i/2$, hence by $(1-\delta)\varepsilon^i/\big(2(1-\delta^T)\big)$ after normalisation. Replacing each $\varepsilon^i$ by $\min\{ \varepsilon^i,\, \xi/(2(1-\delta)) \}$ makes this at most $\xi/2$ whenever $\delta^T \le 1/2$, and enlarging $T$ further makes the truncation term at most $\xi/2$ as well. Applying Theorem~\ref{thm:main} to $\mathbf{a}$ with the reduced margins yields the threshold $T^\ast$ and the scheme $(d,w)$.
\end{proof}

Shrinking the margin lowers the cost of the mechanism but raises the number $M^i$ of final periods over which the terminal reward has to do the work, and hence raises $T^\ast$. The approximation in Corollary~\ref{cor:folk} is thus a trade-off between the horizon and the cost of the deposits rather than a limitation of the construction: for any fixed horizon the mechanism sustains the target exactly, and it is only the deposit cost, not the play, that separates the realised payoff from $v$.

\section{Optimal deposit design}\label{sec:optimal}

Theorem~\ref{thm:main} establishes existence of a feasible deposit scheme by exhibiting a constant per-period deposit collected from the first period onwards together with a terminal payoff. This is far from the only scheme that sustains a given target strategy. Many deposit streams satisfy the incentive constraints, and they are not equivalent in cost, because a deposit paid early is worth more in present value than the same deposit paid late. Collecting the stake early is therefore wasteful, and it is natural to ask which scheme sustains the target at the least cost.

This section makes three points. First, for a fixed target profile and horizon $T$, finding the cheapest scheme is a linear program, one per player, whose objective reduces to a weighted sum of the deposits, the weight on date $t$ being the loss in present value incurred while a unit deposited at $t$ is held. Second, the terminal reward that any feasible scheme must assemble is bounded below by an expression involving only the incentive gap of the game without deposits, and this bounds below in particular the reward $\bar{w}^i$ used in the proof of Theorem~\ref{thm:main}. Third, the same program can be extended to accommodate further requirements that a designer might have, each entering as a modification of a single constraint. For instance, one may ask that cooperation be preferred to deviation by a fixed margin rather than merely weakly, that no deposit exceed the reward obtained in that period, or that the accumulated deposits also cover a fee for the intermediary. Each of these leaves the linear-program structure intact.

On the equilibrium path every player recovers their deposits upon termination, so the only loss relative to the infinite-horizon benchmark is the difference in value of the deposits between the time they are paid and the termination time in which they are returned, due to discounting. We take this cost as the design objective. Because deposits are self-funded and never transferred between players, the problem separates across players: a player's deposit stream and terminal reward enter only that player's own incentive and feasibility constraints, and the program below is solved independently for each $i \in N$. Throughout this section we fix a target profile $\mathbf{a}^\ast \in \mathcal{A}^\ast$, a finite horizon $T$, and a player $i$. We write $D^i_t := \max\{ r^{i,\mathrm{coop}}_t, r^{i,\mathrm{dfct}}_t \}$ for the value of the most profitable one-shot deviation in period $t$, exactly as in the proof of Theorem~\ref{thm:main}.

The cost-minimising scheme solves
\begin{align}
\min_{(d^i_t)_{t=1}^{T},\, w^i \ge 0} \quad & -\delta^{T-1} w^i + \sum_{t=1}^{T} \delta^{t-1} d^i_t \label{eq:opt} \\
\text{subject to} \quad
& \sum_{\tau=t}^{T} \delta^{\tau-t}\big( r^{i,\mathrm{coop}}_\tau - d^i_\tau \big) + \delta^{T-t} w^i \;\ge\; D^i_t + \sum_{\tau=t+1}^{T} \delta^{\tau-t} r^{i,\mathrm{NE}}, \quad t \in \{1,\dots,T\}, \label{eq:incentive} \\
& \sum_{t=1}^{T} d^i_t \;\ge\; w^i, \label{eq:feasibility} \\
& 0 \le d^i_t \le r_t^{i,\mathrm{coop}}, \quad t \in \{1,\dots,T\}.  \label{eq:liquidity}
\end{align}
The constraint \eqref{eq:incentive} is the incentive condition \eqref{eq:IC} from the proof of Theorem~\ref{thm:main}, written for an arbitrary deposit stream: at every period the discounted continuation value on the path, net of the deposits still to be paid and augmented by the terminal reward, must weakly exceed the value of the best one-shot deviation followed by reversion to the Nash threat. The constraint \eqref{eq:feasibility} is feasibility: the accumulated deposits must cover the terminal reward, so that the mechanism is self-funded and calls on no external resources. The constraint \eqref{eq:liquidity} is a liquidity bound, capping the period-$t$ deposit at $r_t^{i,\mathrm{coop}}$. Two remarks are in order. The constant scheme of Theorem~\ref{thm:main} deposits a positive amount in every period and therefore violates \eqref{eq:liquidity} whenever the target pays player $i$ nothing at some date, as happens in the rotation of Subsection~\ref{subsec:congestion_game}; the theorem corresponds to the program without the liquidity bound. Consequently $T \ge T^\ast$ does not guarantee that the capped program is feasible, and we write $T_{\min}$ for the least horizon at which it is. A necessary condition is $\sum_{t=1}^T r^{i,\mathrm{coop}}_t \ge w^i$, so a target that pays player $i$ zero too often cannot be sustained under the cap at any horizon. All constraints are linear in $(d^i_t)_t$ and $w^i$, so \eqref{eq:opt} is a linear program, and it is small: it has $T+1$ variables and $2T+1$ inequalities.

The optimum of \eqref{eq:opt} has a simple structure, which we record because it makes precise the sense in which the deposits are costly.

\begin{lemma}\label{lem:cost}
Suppose \eqref{eq:opt} is feasible. Then it attains its optimum, at every optimum the terminal reward exhausts the accumulated stake, $w^i = \sum_{t=1}^{T} d^i_t$, and the optimal value equals
\begin{align*}
    \sum_{t=1}^{T} \big( \delta^{t-1} - \delta^{T-1} \big)\, d^i_t .
\end{align*}
\end{lemma}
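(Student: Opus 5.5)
The plan is to treat \eqref{eq:opt} as a linear program over a compact polyhedron, and then use a one-variable improvement argument in $w^i$ to pin down the terminal reward at any optimum.

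\textbf{Attainment.} The feasible set is an intersection of finitely many closed half-spaces in $\mathbb{R}^{T+1}$, so it is closed. It is also bounded:
\begin{itemize}
\item by \eqref{eq:liquidity}, each $d^i_t$ lies in $[0, r^{i,\mathrm{coop}}_t]$;
\item by \eqref{eq:feasibility} and $w^i \ge 0$, the reward lies in $[0, \sum_t r^{i,\mathrm{coop}}_t]$.
\end{itemize}
The feasible set is nonempty by hypothesis, hence a nonempty compact set. The objective is linear, hence continuous, so a minimiser exists.

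\textbf{Exhaustion of the stake.} Take any feasible point with $w^i < \sum_{t=1}^T d^i_t$, and replace $w^i$ by $w'^i := \sum_{t=1}^T d^i_t$, keeping every deposit fixed. I check each constraint:
\begin{itemize}
\item \eqref{eq:liquidity} is untouched, since it does not involve $w^i$.
\item \eqref{eq:feasibility} holds with equality.
\item $w'^i \ge 0$, since the deposits are non-negative.
\item Each incentive constraint \eqref{eq:incentive} has its left-hand side raised by $\delta^{T-t}(w'^i - w^i) > 0$, while its right-hand side is unchanged, so it remains satisfied.
\end{itemize}
The objective changes by $-\delta^{T-1}(w'^i - w^i) < 0$, because $\delta \in (0,1)$ makes $\delta^{T-1} > 0$. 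The new point is therefore feasible and strictly cheaper. Hence no point with slack in \eqref{eq:feasibility} can be optimal, and every optimum satisfies $w^i = \sum_t d^i_t$.

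\textbf{The optimal value.} Substituting $w^i = \sum_t d^i_t$ into \eqref{eq:opt} gives
\begin{align*}
-\delta^{T-1}\sum_{t=1}^T d^i_t + \sum_{t=1}^T \delta^{t-1} d^i_t = \sum_{t=1}^T \big(\delta^{t-1} - \delta^{T-1}\big) d^i_t,
\end{align*}
evaluated at the optimal deposits. Each weight is non-negative, which is the sense in which deposits are costly.

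\textbf{Main obstacle.} The only point needing care is the boundedness used for attainment. It relies on the liquidity cap \eqref{eq:liquidity}. Without the cap, I would instead argue that the objective is bounded below by $0$: at any feasible point it is at least $\sum_t(\delta^{t-1} - \delta^{T-1})d^i_t \ge 0$, using \eqref{eq:feasibility}. A feasible linear program whose objective is bounded below attains its minimum, so attainment would still follow.
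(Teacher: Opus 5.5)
Your proof is correct and follows the paper's argument: the paper also raises $w^i$ to $\sum_t d^i_t$, noting that this strictly lowers the objective and only relaxes each incentive constraint. It then substitutes this value to obtain the optimal value. The only difference is attainment: the paper uses exactly your fallback, that by \eqref{eq:feasibility} the objective is bounded below by $\sum_t(\delta^{t-1}-\delta^{T-1})d^i_t \ge 0$, so a feasible linear program attains its minimum, and this has the advantage of not needing the liquidity cap \eqref{eq:liquidity} that your compactness argument relies on.
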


\begin{proof}
By \eqref{eq:feasibility} and $d^i \ge 0$ the objective is at least $\sum_{t=1}^{T}(\delta^{t-1}-\delta^{T-1})d^i_t \ge 0$, so the program is bounded; being feasible, it attains its optimum. The objective is strictly decreasing in $w^i$, and raising $w^i$ relaxes every incentive constraint \eqref{eq:incentive}, in which $w^i$ appears with the non-negative coefficient $\delta^{T-t}$ on the left-hand side. At an optimum $w^i$ therefore takes the largest value admitted by \eqref{eq:feasibility}, namely $\sum_{t=1}^T d^i_t$. Substituting into the objective gives $\sum_{t=1}^{T} \delta^{t-1} d^i_t - \delta^{T-1}\sum_{t=1}^{T} d^i_t$, which is the stated expression.
\end{proof}

Substituting $w^i = \sum_t d^i_t$ therefore reduces \eqref{eq:opt} to an equivalent program in the deposits alone, with $T$ variables, $2T$ constraints, and the explicit objective above. That objective has a direct reading: a unit deposited at date $t$ is charged $\delta^{t-1} - \delta^{T-1}$, the loss in present value incurred over the interval during which that unit is held by the intermediary, and no other cost is borne. This price is strictly decreasing in $t$ and vanishes at $t = T$. Along the locus $w^i = \sum_t d^i_t$ identified by the lemma, moreover, a change in $d^i_T$ leaves both the objective and every incentive constraint unchanged, since $d^i_T$ enters \eqref{eq:incentive} with coefficient $-\delta^{T-t}$ and $w^i$ with coefficient $\delta^{T-t}$. Whenever $r^{i,\mathrm{coop}}_T > 0$ the cost-minimising scheme is therefore not unique, the optimal set containing a segment along which $d^i_T$ and $w^i$ move together.

The terminal reward is also bounded below by a quantity that depends on the target profile and the horizon alone, and not on the deposit stream chosen to fund it. Write
\begin{align*}
    G^i_t(T) := R^{i,\mathrm{coop}}_t(T) - R^{i,\mathrm{dfct}}_t(T)
\end{align*}
for the incentive gap at date $t$ of the game without deposits, that is, for the quantity $\Phi^i_t(T)$ of \eqref{ineq:slack} evaluated at a zero deposit stream, so that a negative gap identifies a date at which the target fails to be self-enforcing on its own.

\begin{proposition}\label{prop:wreq}
Let $d^i \ge 0$ and $w^i \ge 0$ satisfy the incentive constraints \eqref{eq:incentive}. Then
\begin{align*}
    w^i \;\ge\; w^{i,\mathrm{req}}(T) := \max_{t \le T} \ \delta^{\,t-T} \big( -G^i_t(T) \big)^{+} .
\end{align*}
\end{proposition}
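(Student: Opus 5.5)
The plan is to rewrite each incentive constraint \eqref{eq:incentive} in terms of the gap $G^i_t(T)$ and then observe that deposits only enter with a non-positive sign, so dropping them can only make the constraint easier to satisfy. By the definitions in \eqref{eq:Rcoop_Rdfct}, the constraint at date $t$ reads
\begin{align*}
    R^{i,\mathrm{coop}}_t(T) - \sum_{\tau=t}^{T} \delta^{\tau-t} d^i_\tau + \delta^{T-t} w^i \;\ge\; R^{i,\mathrm{dfct}}_t(T),
\end{align*}
which is equivalent to
\begin{align*}
    \delta^{T-t} w^i \;\ge\; -G^i_t(T) + \sum_{\tau=t}^{T} \delta^{\tau-t} d^i_\tau .
\end{align*}

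\textbf{Step 1: drop the deposits.} Since $d^i_\tau \ge 0$ for every $\tau$, the deposit sum on the right is non-negative. Hence $\delta^{T-t} w^i \ge -G^i_t(T)$ for every $t \le T$. Dividing by $\delta^{T-t} > 0$ gives $w^i \ge \delta^{t-T}\big(-G^i_t(T)\big)$.

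\textbf{Step 2: take positive parts.} The hypothesis $w^i \ge 0$ also gives $w^i \ge \delta^{t-T}\cdot 0$. Combining this with Step 1 yields
\begin{align*}
    w^i \ge \delta^{t-T}\big(-G^i_t(T)\big)^{+} \quad \text{for each } t \le T.
\end{align*}

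\textbf{Step 3: maximise over dates.} Taking the maximum over $t \le T$ gives $w^i \ge w^{i,\mathrm{req}}(T)$, as claimed.

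There is no real obstacle; the only care needed is the algebraic identification of the deposit-free part of \eqref{eq:incentive} with $G^i_t(T)$. This is exactly the identity $\Phi^i_t(T) = G^i_t(T) - \sum_{\tau=t}^T \delta^{\tau-t} d^i_\tau$, which holds for an arbitrary deposit stream because $\Phi^i_t(T)$ in \eqref{ineq:slack} is defined with the same truncated values. Note also that the feasibility constraint \eqref{eq:feasibility} is not needed for this bound. As a remark, the bound therefore applies in particular to $\bar{w}^i$ in Theorem~\ref{thm:main}, since the constant scheme there satisfies \eqref{eq:incentive}.
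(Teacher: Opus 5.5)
Your proposal is correct and matches the paper's proof: both rewrite \eqref{eq:incentive} as $\delta^{T-t} w^i \ge -G^i_t(T) + \sum_{\tau=t}^{T}\delta^{\tau-t} d^i_\tau$, drop the non-negative deposit sum, use $w^i \ge 0$ to pass to positive parts, and take the maximum over $t$. You also spell out the division by $\delta^{T-t}>0$, which the paper leaves implicit.
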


\begin{proof}
Rearranging \eqref{eq:incentive} gives $\sum_{\tau=t}^{T} \delta^{\tau-t} d^i_\tau - \delta^{T-t} w^i \le G^i_t(T)$ for every $t \le T$. Deposits are non-negative, so the sum on the left is non-negative, whence $\delta^{T-t} w^i \ge -G^i_t(T)$. As $w^i \ge 0$, the bound follows on taking the maximum over $t$.
\end{proof}

The hypothesis is weaker than feasibility for \eqref{eq:opt}, since it invokes neither the liquidity cap nor \eqref{eq:feasibility}, so the bound applies both to every scheme admissible for the program and to the scheme constructed in the proof of Theorem~\ref{thm:main}. In particular $w^{i,\mathrm{req}}(T) \le \bar{w}^i$, so the reward chosen there for convenience is never smaller than the one the target requires.

The program as stated treats the intermediary as costless, allows each agent to deposit up to the on-path payoff $r_t^{i,\mathrm{coop}}$, and asks only that cooperation be weakly preferred to deviation. Each of these idealisations can be dropped, and dropping any one of them changes a single constraint, leaving the linear-program structure intact.

The incentive constraints in \eqref{eq:opt} hold with equality at the optimum, so cooperation is only just preferred to deviation and is fragile to misspecification. Replacing the right-hand side of the incentive family by $D^i_t + \sum_{\tau=t+1}^{T} \delta^{\tau-t} r^{i,\mathrm{NE}} + \eta$, for a margin $\eta \ge 0$, requires on-path play to dominate deviation by at least $\eta$ in every period. The equilibrium then tolerates payoff misspecification, occasional non-best-response play by opponents, or error in evaluating continuation values, up to $\eta$. The optimal cost is non-decreasing in $\eta$, which is the price of robustness, and because a larger early stake strengthens the late-period incentives, robustness pushes deposits earlier, working against the deferral that pure cost minimisation produces. In an international agreement, where early withdrawal is the main hazard, front-loading the stake in this way is what makes commitment credible.

The neutral intermediary need not operate for free. Replacing the feasibility constraint by $\sum_{t=1}^T d^i_t \ge w^i + \kappa$, for a fee $\kappa \ge 0$, lets the intermediary retain $\kappa$ on the path while the player still recovers $w^i$. Off the equilibrium path, Theorem~\ref{thm:main} places no restriction on the disposition of deposits forfeited after a deviation, so these too may accrue to the intermediary at no cost to the on-path incentives. The mechanism is therefore operable by a self-funding neutral party and the fee enters the design only through the larger stake it forces the player to deposit, and hence through a higher cost.

Finally, the liquidity bound itself can be varied. Capping the period-$t$ deposit at the current payoff is natural in some settings but not all: in the public goods game of Subsection~\ref{subsec:public_goods_game}, for instance, parties post collateral that exceeds a single period's payoff. Loosening the cap lets the stake be assembled later and more cheaply, lowering both the cost and the minimal horizon $T_{\min}$ at which \eqref{eq:opt} is feasible; tightening it forces deposits into earlier periods, raising both.

The examples below demonstrate the mechanism in different environments and illustrate several of these variations.

\section{Examples}\label{sec:examples}

We illustrate the mechanism through four examples. The first, a repeated prisoner's dilemma (Subsection~\ref{subsec:prisoner_dilemma}), is the canonical environment with a unique, inefficient, stage-game equilibrium. It serves as the simplest test of the mechanism and lets us quantify how little is required to restore cooperation: across a wide range of payoffs the minimum horizon is as small as $T_{\min}=2$. The second, a congestion game (Subsection~\ref{subsec:congestion_game}), moves from symmetric cooperation to a strategy in which players take turns. The third, an $n$-player public goods game (Subsection~\ref{subsec:public_goods_game}), shows that reverting to the harshest stage-game equilibrium is unnecessary. It shows that the mechanism can sustain a gentler, non-equilibrium punishment as a continuation, trading deterrence against off-path welfare, and it dispenses with the liquidity cap. The final example, a dynamic common-pool resource with an endogenous stock (Subsection~\ref{subsec:tragedy_of_the_commons}), takes the mechanism beyond repeated stage-games, showing that the construction needs only a uniform bound on the one-shot deviation gain, not a stationary payoff structure.

\subsection{Repeated prisoner's dilemma}\label{subsec:prisoner_dilemma}

Consider the infinitely repeated prisoner's dilemma with stage payoffs normalised so that mutual defection yields $0$, unilateral defection yields $r^{\mathrm{dfct}} = 1$ and mutual cooperation yields $r^{\mathrm{coop}} \in (0,1)$. Each repeated prisoner's dilemma then corresponds to a single point $(\delta, r^{\mathrm{coop}}) \in (0,1)^2$. For each such point we solve the program \eqref{eq:opt}, and we record the minimal implementation horizon $T_{\min}$ that sustains mutual cooperation at every period. Figure~\ref{fig:PD} reports $T_{\min}$ over $(0,1)^2$. The solid line $r^{\mathrm{coop}} = 1 - \delta$ is the infinite-horizon incentive boundary: in the hatched region below it, mutual cooperation cannot be sustained even in the infinitely repeated game, so condition \eqref{eq:sigma-star} does not hold. Above the boundary the deposit mechanism restores cooperation, with $T_{\min}$ increasing as the boundary is approached and falling to $T_{\min} = 2$ away from it. These horizons are orders of magnitude below the sufficient threshold $T^\ast$ of Theorem~\ref{thm:main}, which illustrates how much is lost by fixing the deposit to be constant. To locate empirically relevant cases we overlay canonical parametrisations from the literature, rescaled to the normalised payoffs: \citet{axelrod1981evolution} specifies no discount factor and appears as the horizontal line $r^{\mathrm{coop}} = 0.5$, while \citet{cason2019individual, romero2018constructing, dal2011evolution} report a discount (or continuation) factor and appear as points. For most of these games a horizon of two or three periods already suffices, so that very limited commitment is enough to sustain cooperation.

Because the target is stationary, the quantities entering Proposition~\ref{prop:wreq} can be computed in closed form, and doing so accounts for the features of Figure~\ref{fig:PD}. Write $k := T - t$ for the number of periods that remain, and note that $D_t = 1$ at every date because $r^{\mathrm{coop}} < 1$. The incentive gap is then
\begin{align*}
    G_t(T) = r^{\mathrm{coop}}\,\frac{1 - \delta^{\,k+1}}{1-\delta} - 1 ,
\end{align*}
which is increasing in $k$, so that $G_t(T) \ge 0$ if and only if $\delta^{\,k+1} \le 1 - (1-\delta)/r^{\mathrm{coop}}$. Since $\delta^{\,k+1} \to 0$, such a $k$ exists precisely when the right-hand side is positive, that is when $r^{\mathrm{coop}} > 1 - \delta$, which is the boundary drawn in Figure~\ref{fig:PD}; below it the gap is negative at every date and no horizon can help. Above it the gap is negative precisely at the dates with fewer than
\begin{align*}
    k^\ast = \left\lceil \frac{\ln\!\big( 1 - (1-\delta)/r^{\mathrm{coop}} \big)}{\ln \delta} \right\rceil - 1
\end{align*}
periods remaining, a count that plays here the role of the constant $M^i$, which the proof of Theorem~\ref{thm:main} bounds rather than computes. On that range
\begin{align*}
    \delta^{-k}\big(-G_t(T)\big) = \delta^{-k}\Big( 1 - \frac{r^{\mathrm{coop}}}{1-\delta} \Big) + \frac{\delta\, r^{\mathrm{coop}}}{1-\delta} ,
\end{align*}
whose first coefficient is negative because $r^{\mathrm{coop}} > 1-\delta$, so the expression is decreasing in $k$, the maximum in Proposition~\ref{prop:wreq} is attained at $k = 0$, and
\begin{align*}
    w^{\mathrm{req}} = 1 - r^{\mathrm{coop}} = r^{\mathrm{dfct}} - r^{\mathrm{coop}} .
\end{align*}
No scheme can sustain mutual cooperation with a stake smaller than the one-shot temptation faced in the final period, and this floor depends neither on the discount factor nor on the horizon. It is the last period's incentive constraint that binds in Proposition~\ref{prop:wreq}, even though the gap is negative at every date with fewer than $k^\ast$ periods remaining.

\begin{figure}[!ht]
  \centering
  \includegraphics[width=0.65\textwidth]{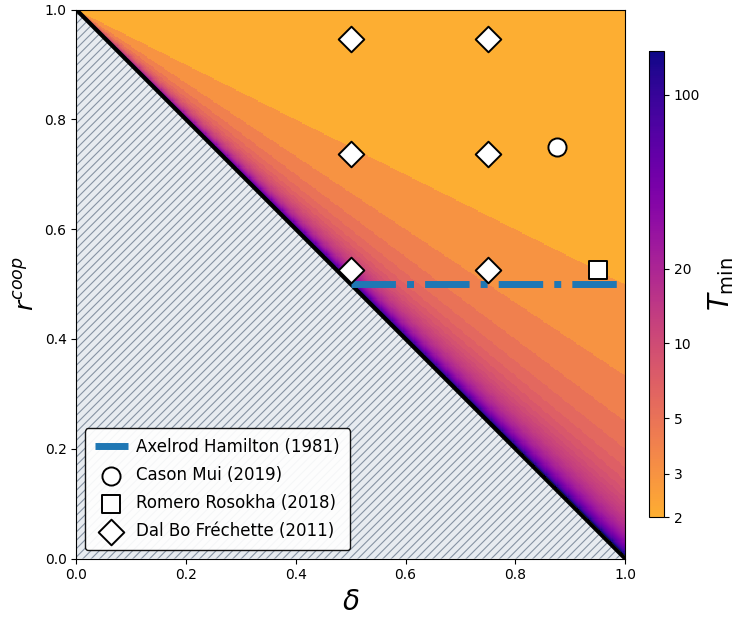}
  \caption{Minimal horizon sustaining cooperation in the repeated prisoner's
  dilemma. With payoffs normalised to $r^{\mathrm{NE}} = 0$ and
  $r^{\mathrm{dfct}} = 1$, every repeated prisoner's dilemma is a point
  $(\delta, r^{\mathrm{coop}}) \in (0,1)^2$. Colour gives the smallest horizon
  $T_{\min}$ that allows cooperation to be sustained. In the hatched region
  below the black line $r^{\mathrm{coop}} = 1 - \delta$ cooperation fails even
  with an infinite horizon. Overlaid are canonical prisoner's dilemma games
  rescaled to these payoffs.}
  \label{fig:PD}
\end{figure}

\subsection{Congestion game}\label{subsec:congestion_game}

Consider a finite-horizon game with $n \ge 3$ players in which each player chooses between actions $\rA{in}$ and $\rA{out}$. Action $\rA{out}$ yields a payoff of zero. If $m \ge 1$ players choose $\rA{in}$, each of them receives $\phi(m)$, where $\phi$ is strictly decreasing with $\phi(1) > 0$ and $\phi(n) = 0$, so the payoff from the resource is highest when a single player uses it and vanishes when all do. This is a congestion game, in which the payoff from a resource falls with its congestion level \citep{christodoulou2005price}. We assume the map $m \mapsto m\,\phi(m)$ is decreasing, so aggregate welfare is greatest when exactly one player chooses $\rA{in}$. We focus our attention on a rotating allocation in which the players take turns playing $\rA{in}$, so that exactly one player receives $\phi(1)$ in each period and the others receive zero. This allocation is efficient (it obtains the maximal aggregate welfare) but not self-enforcing: a player whose turn it is not can deviate to $\rA{in}$, raising congestion and earning $\phi(2) > 0$ rather than zero, and backward induction makes this deviation attractive in the closing periods.

We design a deposit scheme that sustains this rotation for a three-player example. Figure~\ref{fig:coordination} plots three quantities per player. The cooperation and deviation values $R^{i,\mathrm{coop}}_t(T)$ and $R^{i,\mathrm{dfct}}_t(T)$ are those of \eqref{eq:Rcoop_Rdfct}. The deposit scheme enters through the net deposit value
\begin{align*}
    V^i_t(d^i,\bar{w}^i) := -\sum_{\tau = t}^T \delta^{\tau-t}d^i_\tau + \delta^{T-t} \bar{w}^i,
\end{align*}
which is the discounted terminal reward net of the deposits still to be paid from period $t$ onwards. Its first period value $V^i_0(d^i,\bar{w}^i)$ measures the net present cost of the mechanism to player $i$. The cost is minimal and may be unequal across players: player 1 cost is zero as they need no terminal reward, while players 2 and 3 carry only a negligible cost. Adding $V^i_t(d^i,\bar{w}^i)$ to $R^{i,\mathrm{coop}}_t(T)$ provides the on-path continuation value under the mechanism. 

\begin{figure}[!ht]
  \centering
  \includegraphics[width=0.65\textwidth]{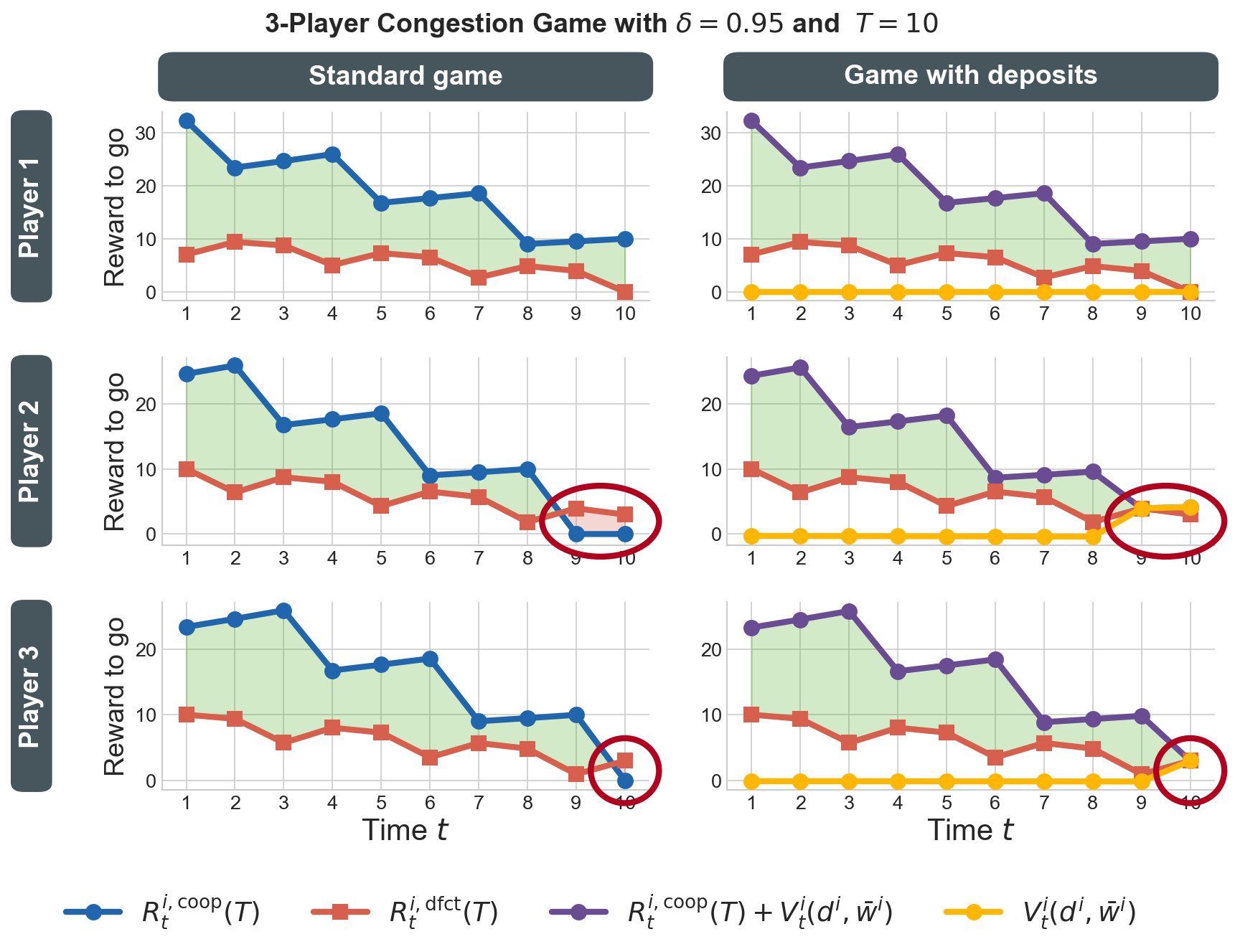}
  \caption{Deposits sustaining a rotating allocation in the congestion game.
  Players take turns playing action $\rA{in}$. This strategy is efficient but
  not self-enforcing. Without deposits (left), the deviation value
  $R^{i,\mathrm{dfct}}_t$ (red) overtakes the cooperation value
  $R^{i,\mathrm{coop}}_t$ (blue) near the end of the game for players 2 and 3.
  With deposits and a terminal reward (right), the cooperation value
  $R^{i,\mathrm{coop}}_t - V^i_t$ (purple) stays above
  $R^{i,\mathrm{dfct}}_t$ throughout. The cost of depositing $V^i_t$ is
  negligible for all players and zero for player 1.}
  \label{fig:coordination}
\end{figure}
The mechanism's demands scale predictably with the number of players $n$. Under the rotation, each player enjoys $\phi(1)$ every $n$ periods, while the temptation to defect, worth $\phi(2)$, recurs in all the remaining periods. As $n$ grows, the cooperative reward becomes sparser while the temptation persists, so the terminal stake needed to deter deviation grows with $n$.

\subsection{Public goods game}\label{subsec:public_goods_game}

Consider a public goods game with $n$ players. In each period every player chooses to cooperate, contributing one unit to a public pot, or to defect, contributing nothing. The total contribution is multiplied by a synergy factor $\lambda \in (1,n)$ and shared equally among all $n$ players, irrespective of their own contributions. Such games are a standard model for climate dilemmas \citep{hintze2020inclusive, hasson2010climate, luo2025cooperative}. If $n_C$ players cooperate, each of them receives $\lambda n_C / n - 1$, while each of the remaining players receives $\lambda n_C / n$. A player who switches unilaterally from cooperation to defection lowers $n_C$ by one and so raises their own stage payoff by $1 - \lambda/n$, which is positive because $\lambda < n$. Defection is therefore dominant, the unique stage-game Nash equilibrium has every player defecting for a payoff of zero, and yet full cooperation would give each player $\lambda - 1 > 0$. This is the tragedy of the commons, which generalises the prisoner's dilemma: cooperation is efficient but individually unstable.

The deposit mechanism allows cooperation to be sustained as a subgame perfect equilibrium in the finitely repeated game, provided the horizon is long enough for the accumulated stakes to discipline behaviour in the final periods. But the construction punishes any deviation by reverting to the stage-game's unique Nash equilibrium, an outcome that is bad for every player, not just the deviator, and, in the climate context, bad for the environment itself, since it prescribes a collapse to zero abatement.

\begin{figure}[!ht]
  \centering
  \includegraphics[width=0.65\textwidth]{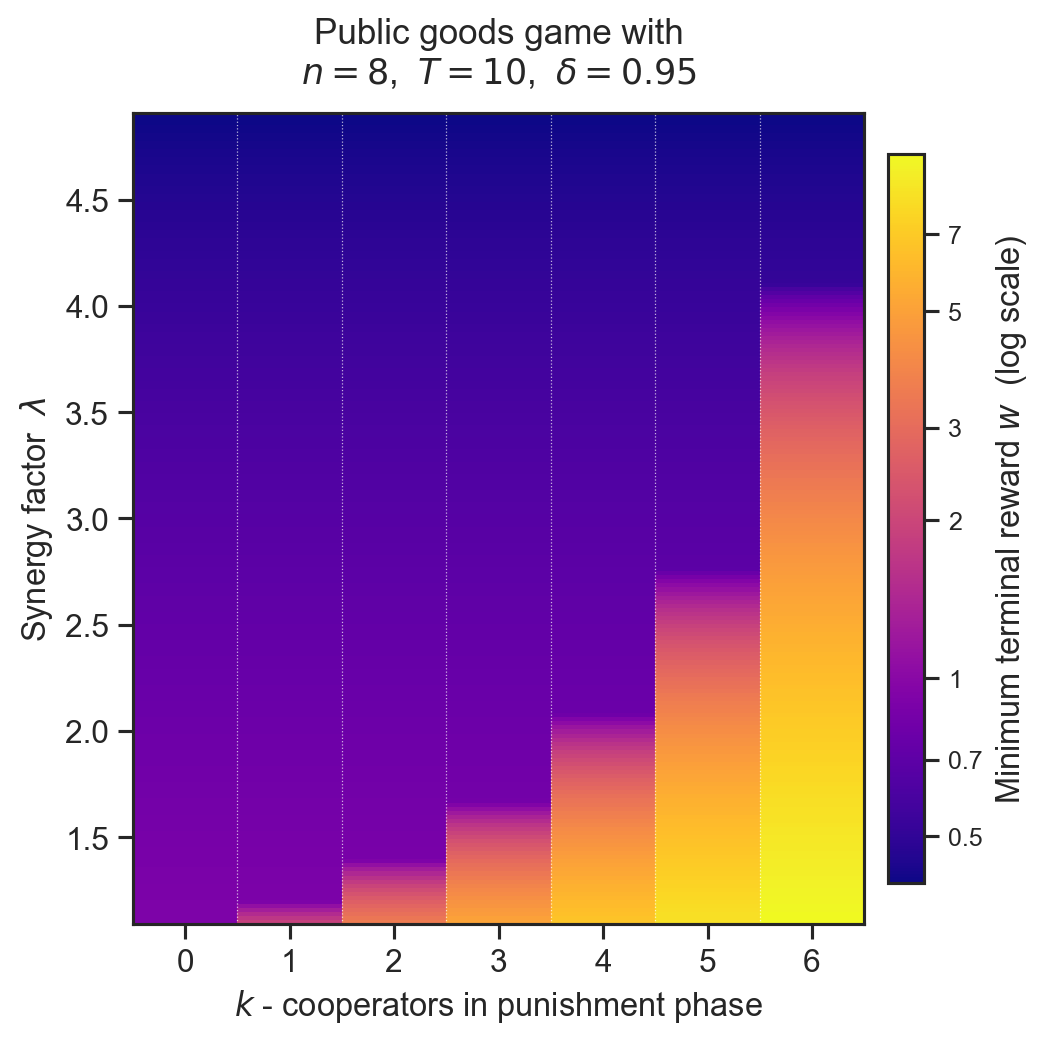}
  \caption{Cost of leniency in the public goods game. The heatmap reports the
  minimum terminal reward $w$ sustaining full cooperation in a public goods
  game ($n = 8$, $T = 10$, $\delta = 0.95$), as a function of the synergy
  factor $\lambda$ and the number of cooperators $k$ retained after a
  deviation. Following a defection the deviator defects forever while the
  remaining players run a $k$-cooperator strategy, itself sustained by
  deposits; $k = 0$ is reversion to universal defection, the worst stage-game
  Nash equilibrium, as in Theorem~\ref{thm:main}. Since the deviator
  free-rides at $\lambda k/n$, more lenient punishments (larger $k$) demand a
  larger reward, while higher synergy $\lambda$ lowers it.}
  \label{fig:pgg}
\end{figure}

We therefore consider now a gentler continuation, indexed by a punishment level $k$. After a deviation, we assume that the deviator will defect in every subsequent period, while the remaining $n-1$ players switch to a strategy in which $k$ of them cooperate and the other $n-1-k$ defect at each period. The parameter $k$ interpolates between the harshest threat and more lenient ones. The $k$-cooperator strategy is not a stage equilibrium for $k \geq 1$. However, the key observation is that the deposit mechanism can sustain this punishment continuation just as it sustains the cooperative path, so the designer is not restricted to stage-game Nash equilibria when choosing a threat. Leniency is collectively desirable: reverting to everyone defecting leaves every player with zero, whereas retaining $k$ players cooperating leads to an aggregate welfare of $k(\lambda-1)$. It comes at a price, since it weakens deterrence: the deviator keeps free-riding and so earns $\lambda k/n$ per period, which grows with $k$, making defection more tempting and raising the terminal reward needed to deter it.

Figure~\ref{fig:pgg} makes this trade-off explicit. For each synergy factor $\lambda$ and punishment level $k$, the heatmap reports the minimum terminal reward $w$ that solves \eqref{eq:opt} with $\lambda k/n$ instead of $r^{\mathrm{NE}}$ and thus sustains full cooperation as a subgame perfect equilibrium. As anticipated, $w$ rises with $k$, since more lenient punishments demand larger stakes, and falls with $\lambda$, since a higher synergy factor both lowers the one-shot temptation $1-\lambda/n$ and raises the per-period value of cooperation $\lambda-1$. The figure then shows that full cooperation can be sustained while preserving substantial cooperation off the equilibrium path, at the cost of a larger terminal stake. In this way the group never has to threaten the mutually destructive collapse triggered by a single defection.

\subsection{Tragedy of the commons}\label{subsec:tragedy_of_the_commons}

We illustrate the deposit mechanism in a setting where the stage-game is not fixed but evolves endogenously with a shared state variable, showing that the construction turns on a uniform bound on the one-shot deviation gain rather than on stationarity of the payoffs. The following example is a discrete-time version of the Tragedy of the Commons game from~\citet{polasky2006cooperation}. 

The players $N$, with $n \geq 2$, harvest from a common renewable resource over a finite horizon $t = 1, \ldots, T$. The resource stock at the beginning of period $t$ is $S_t \in [0, K]$, where $K > 0$ is the carrying capacity. At each period, each player $i$ simultaneously chooses a harvest level $h^i_{t} \geq 0$. Total harvest is $H_t = \sum_{i \in N} h^i_{t}$, subject to $H_t \leq S_t$, and the stock evolves according to
\begin{align*}
S_{t+1} = f(S_t - H_t),
\end{align*}
where $f \colon [0, K] \to [0, K]$ satisfies $f(0) = 0$, $f(K) = K$, $f'(x) > 0$, $f''(x) < 0$ for all $x \in (0, K)$, and $f'(0) > 1$. Player $i$'s per-period payoff is
\begin{align*}
\pi_i(h^i_{t}, S_t) = [P - c(S_t)] \, h^i_{t},
\end{align*}
where $P > 0$ is the harvest price and $c \colon (0, K] \to \mathbb{R}$ is the unit cost, satisfying $c(S) > 0$, $c'(S) < 0$, $c''(S) > 0$, and $\lim_{S \to 0} c(S) = \infty$. We assume $P > c(K)$, so that harvesting is profitable at high stock levels. The break-even stock $\underline{S} \in (0, K)$ is defined by $P = c(\underline{S})$.

This model captures the canonical tragedy of the commons: each player's extraction imposes a negative externality on all others by depleting the stock and raising future costs. In the infinite-horizon version with discount factor $\delta \in (0,1)$, \citet{polasky2006cooperation} show that a cooperative outcome, sustaining the stock at an optimal steady-state level $S^* \in (\underline{S}, K)$, can be supported as a subgame perfect equilibrium via a two-phase punishment scheme, provided $\delta$ is sufficiently large and $n$ sufficiently small relative to the growth rate. The punishment forces a deviator to extract at a stock level below $\underline{S}$ where rents are negative, calibrated so that the deviator's continuation payoff is exactly zero. However, this construction relies on the infinite horizon: with a finite endpoint, backward induction unravels cooperation from the terminal period.

Although Theorem~\ref{thm:main} applies to repeated stage-games with a fixed payoff structure, the key ideas of the construction carry over to this dynamic setting. The essential observation is that the cooperative path eventually reaches a neighbourhood of the steady state $S^*$, after which the per-period payoffs and deviation incentives stabilise. What matters for the deposit mechanism is not stationarity of the payoff structure, but a uniform bound on the deviation gain along the target path. Since the per-period payoff $[P - c(S_t)] h^i_{t}$ is linear in $h^i_{t}$, the optimal one-shot deviation at any stock level $S_t$ along the cooperative path is to extract as much as the feasibility constraint allows. The resulting deviation gain is increasing in $S_t$ and is uniformly bounded above by some $\bar{\Delta}(S^*)$. The argument from Theorem~\ref{thm:main} then applies with only notational changes: one replaces the fixed cooperation and defection payoffs with their state-dependent counterparts, and the tail truncation error is controlled by the same bound $\max_t |r_t^{i,\mathrm{coop}} - r^{i,\mathrm{NE}}_t| \le C^i < \infty$.

The deposit mechanism then applies as before. Each player $i$ pays a constant deposit $d^i > 0$ at each period along the cooperative path. After any deviation, play reverts to the Nash equilibrium of the dynamic game, in which players extract without regard for future stock levels, and all accumulated deposits are forfeited. The discounted cost of the deposit stream is bounded by $d^i/(1 - \delta)$ regardless of $T$, as shown in \eqref{eq:deposit-bound}, while the undiscounted accumulation $d^i T$ grows linearly with the horizon. As in Theorem \ref{thm:main}, this asymmetry ensures that deposits alone close the incentive gap for all but the final $M^i$ periods, where $M^i$ is a constant independent of $T$. In those final periods, a terminal reward $w^i$, financed by the accumulated deposits, restores the incentive to cooperate. For $T$ sufficiently large, the feasibility constraint $d^i T \geq w^i$ is satisfied, and deviation is unprofitable at every point along the cooperative path.

This suggests that the deposit mechanism can sustain cooperation in dynamic games with endogenous state variables, not only in repeated stage-games.The essential requirement is not stationarity of the payoff structure but a uniform bound on the one-shot deviation gain along the target equilibrium path.

Taken together, the four examples demonstrate the reach of the mechanism. The prisoner's dilemma shows that recovering cooperation can be cheap and quick; the congestion game, that the target may be asymmetric and the cost unevenly shared, even zero for some players; the public goods game, that the off-path threat can be made efficient rather than ruinous; and the common-pool resource, that the construction survives an endogenous state. Each is an instance of the same design, distinguished only by the target profile, the threat, and the liquidity cap of Section~\ref{sec:optimal}.

\section{Conclusion}\label{sec:conclusion}

We have introduced a self-enforcing deposit mechanism that recovers the Nash-threat folk theorem in finite time. Our main result shows that any sequence of pure action profiles which is sustainable by Nash threats with a strict margin in the infinitely repeated game, that is any profile in the class $\mathcal{A}^\ast$, can be reproduced (at least in its initial part) as a subgame perfect equilibrium of the same stage-game repeated over a sufficiently long but finite horizon, once the play is augmented with voluntary deposits and a terminal payoff. The realised payoffs match the infinite-horizon benchmark up to the discounted cost of the deposits, which can be driven to zero by lengthening the horizon while shrinking the per-period deposit accordingly. 

The lesson we draw is conceptual rather than merely technical: the inefficiency of finitely repeated games reflects the absence of instruments for transferring incentives across periods, not the absence of future interaction as such. Once players can post a refundable, self-funded stake that is forfeited after a deviation, an accumulated terminal reward plays exactly the role that a diminished continuation value plays in the infinite-horizon game, and the cooperation that backward induction would otherwise unravel is restored.

The construction is driven by an asymmetry. Because deposits are paid before the terminal date and earn no interest, they are costly in present value; yet their discounted cost remains uniformly bounded while their undiscounted accumulation grows linearly with the horizon. A stream of small, incentive-compatible payments therefore assembles a terminal stake large enough to discipline behaviour in the closing periods, at a present-value cost that vanishes as the horizon grows. The mechanism leaves the stage-game untouched, requires no transfers between players, and presumes only a neutral intermediary that holds the deposits and returns them according to a rule fixed in advance. The deposits are incentive compatible at every stage, so participation by the players never has to be assumed, and in this sense the mechanism is self-enforcing among them; what it does require is that the intermediary can commit to the refund rule, so the construction relocates the commitment requirement rather than removing it.

Our analysis is constructive. For any target profile we exhibit an explicit horizon threshold, per-period deposit, and terminal reward, and we characterise the cost-minimising scheme as a small linear program, solved separately for each player because the problem does not couple them. Discounting makes it optimal to collect deposits as late as the incentive constraints permit, and the same program accommodates three economically natural features of the environment, namely the robustness of cooperation to misspecification, the funding of the intermediary, and the agents' liquidity, with each entering as a modification of a single constraint. This exposes a clean tension between cost and credibility: pure cost minimisation defers the stake, whereas a robustness margin and a tight liquidity cap push it forward, and that front-loading is precisely what makes commitment credible when early withdrawal is the principal hazard.

The reach of the design is wide. It applies to arbitrary stage-games with any finite number of players and accommodates asymmetric targets whose cost falls unevenly and may even be zero for some players. The mechanism allows the off-path threat to be a gentler continuation that is itself sustained by deposits rather than a costly collapse to the stage-game Nash equilibrium, and it extends beyond repeated games to dynamic interactions with an endogenous state, where only a uniform bound on the one-shot deviation gain, and not stationarity of the payoffs, is required. These properties make the mechanism a natural candidate wherever interaction is finite, direct transfers are undesirable or infeasible, and a minimal enforcement technology is available. International agreements on climate or trade, in which parties could pledge refundable collateral such as sovereign bonds without committing to transfers between signatories, are a leading example, and the front-loading induced by a robustness margin speaks directly to the withdrawal risk that such agreements face. Digital environments offer another, since a smart contract can serve as the neutral intermediary that holds and conditionally releases funds, including in the emerging economies of interacting artificial agents that bargain repeatedly but without long-term commitment.

Several assumptions invite relaxation. Our environment has observed actions and deposit decisions, and complete information about the stage-game; extending the mechanism to imperfect public or private monitoring, where deviations are detected only through noisy signals and refunds must be conditioned accordingly, is the most immediate direction. The intermediary, though it maximises nothing, must be able to commit to the refund rule and to remain solvent, so understanding the mechanism's robustness when this party may fail, and how far decentralised enforcement through smart contracts can substitute for it, is a question of practical importance. The common-pool example suggests that the construction admits a general formulation for stochastic and Markov games, in which the uniform deviation-gain bound replaces stationarity, and we conjecture an analogous result at that level of generality. Finally, heterogeneous discounting, asymmetric liquidity across players, and the experimental behaviour of agents facing such a mechanism are all natural next steps. We leave these questions to future work.

% ------------------------------------------------------------
% EJOR / Elsevier required end-of-manuscript statements.
% >>> EDIT THE PLACEHOLDER TEXT BELOW BEFORE SUBMITTING <<<
% Delete the generative-AI declaration entirely if there is
% nothing to disclose.
% ------------------------------------------------------------

\section*{Funding}

Giulio Salizzoni gratefully acknowledges the support of the Swiss National Science Foundation, grant number 207984 and of the SNSF NCCR Automation Grant.

\section*{Declaration of generative AI usage}

During the preparation of this work the authors used Claude in
order to correct grammatical mistakes and typos, and to smooth out a final version of the already human-written paper. After using this tool, the authors reviewed and edited
the content as needed and take full responsibility for the content of the
published article.

\bibliographystyle{apalike}
\bibliography{sn-bibliography}

\end{document}